\documentclass[11pt,letterpaper]{article}

\usepackage{booktabs} 

\newcommand\spacingset[1]{\renewcommand{\baselinestretch}%
  {#1}\small\normalsize}

\usepackage{latexsym}
\usepackage{amssymb,amsmath, bm,pgfplots,tikz,bbm}
\usepackage{graphicx}
\usepackage{marvosym}
\usepackage{multirow,float}
\usepackage{algpseudocode}
\usepackage{caption}
\usepackage{mathtools}
\usepackage{subcaption}
\usepackage{comment}
\usepackage{enumitem}
\allowdisplaybreaks

\usepackage{natbib}
\usepackage{framed}   % grey terminal-style prompt boxes
\usepackage{soul}     % color highlighting in the pipeline figure
\usepackage{longtable}  % must come BEFORE arydshln

\usepackage[colorlinks=true, citecolor=blue, linkcolor=black,
urlcolor=blue, bookmarksopen=true, bookmarksnumbered=true,
pdfstartview=FitH, breaklinks=true]{hyperref}
\usepackage{etoolbox}
\makeatletter
\patchcmd{\NAT@citex}
  {\@citea\NAT@hyper@{\NAT@nmfmt{\NAT@nm}\hyper@natlinkbreak{\NAT@aysep\NAT@spacechar}{\@citeb\@extra@b@citeb}\NAT@date}}
  {\@citea\NAT@nmfmt{\NAT@nm}\NAT@aysep\NAT@spacechar\NAT@hyper@{\NAT@date}}{}{}
\patchcmd{\NAT@citex}
  {\@citea\NAT@hyper@{\NAT@nmfmt{\NAT@nm}\hyper@natlinkbreak{\NAT@spacechar\NAT@@open\if*#1*\else#1\NAT@spacechar\fi}{\@citeb\@extra@b@citeb}\NAT@date}}
  {\@citea\NAT@nmfmt{\NAT@nm}\NAT@spacechar\NAT@@open\if*#1*\else#1\NAT@spacechar\fi\NAT@hyper@{\NAT@date}}{}{}
\makeatother

\usepackage{dcolumn}
\newcolumntype{.}{D{.}{.}{-1}}
\newcolumntype{d}[1]{D{.}{.}{#1}}

\usepackage{theorem}
\theoremstyle{plain}
\theoremheaderfont{\scshape}
\newtheorem{theorem}{Theorem}
\newtheorem{proposition}{Proposition}
\newtheorem{assumption}{Assumption}

\newcommand{\qed}{\hfill \ensuremath{\Box}}
\newcommand{\indep}{\mbox{$\perp\!\!\!\perp$}}

\DeclareMathOperator*{\argmin}{argmin}

\newenvironment{proof}{\vspace{1ex}\noindent{\bf Proof}\hspace{0.5em}}
{\hfill\qed\vspace{1ex}}
\usepackage{kantlipsum}
\allowdisplaybreaks

\graphicspath{{Manuscripts_application/figs/}}

\usepackage{rotating}

\usepackage{arydshln}
\usepackage{threeparttable}

\usepackage[compact]{titlesec}

\usepackage[ruled,linesnumbered,vlined]{algorithm2e}
\usepackage{pifont}

\newtheorem{definition}{Definition}

\newcommand\E{\mathbb{E}}
\newcommand\R{\mathbb{R}}
\newcommand\cR{\mathcal{R}}
\renewcommand\P{\mathbb{P}}

\newcommand\cQ{\mathcal{Q}}

\newcommand\bP{\bm{P}}

\newcommand\bR{\bm{R}}

\newcommand\bh{\bm{h}}

\newcommand\bx{\bm{x}}
\newcommand\bX{\bm{X}}

\newcommand\bW{\bm{W}}

\newcommand\boldf{\bm{f}}
\newcommand\bg{\bm{g}}

\newcommand\bU{\bm{U}}
\newcommand\cD{\mathcal{D}}

\newcommand\cS{\mathcal{S}}

\newcommand\cU{\mathcal{U}}
\newcommand\cF{\mathcal{F}}
\newcommand\cX{\mathcal{X}}

\newcommand\bgamma{\boldsymbol{\gamma}}
\newcommand\btheta{\boldsymbol{\theta}}

\newcommand\blambda{\boldsymbol{\lambda}}
\usetikzlibrary{decorations.markings}
\usetikzlibrary{decorations.pathmorphing}
\usetikzlibrary{shapes.geometric, arrows}
\usetikzlibrary{arrows,decorations.pathmorphing,backgrounds,positioning,fit,matrix}
\usetikzlibrary{shapes,decorations,arrows,calc,arrows.meta,fit,positioning}
\tikzset{auto,node distance =1 cm and 1 cm,semithick,
	state/.style ={circle, draw, minimum width = 0.7 cm},
	point/.style = {circle, draw, inner sep=0.04cm,fill,node contents={}},
	bidirected/.style={Latex-Latex,dashed},
	el/.style = {inner sep=2pt, align=left, sloped}
}

\graphicspath{{figs/}}

\begin{document}

\title{\bf Causal Inference with Video Features as Treatments}

\author{Kentaro Nakamura\thanks{Ph.D. Student, John F. Kennedy School
    of Government, Harvard University. Email:
    \href{mailto:knakamura@g.harvard.edu}{knakamura@g.harvard.edu}.} \and
    Adam Breuer\thanks{Assistant Professor, Department of Computer Science
    and Department of Government, Dartmouth College.
    Email: \href{mailto:adam.breuer@dartmouth.edu}{adam.breuer@dartmouth.edu},
    URL: \href{https://www.adambreuer.com}{https://www.adambreuer.com}.} \and
    Michael H. Crespin\thanks{Professor, Department of Political Science,
    and Director, Carl Albert Congressional Research and Studies Center,
    University of Oklahoma.
    Email: \href{mailto:crespin@ou.edu}{crespin@ou.edu}.} \and
    Bryce J. Dietrich\thanks{Associate Professor, Department of Political
    Science, Purdue University.
    Email: \href{mailto:bdietric@purdue.edu}{bdietric@purdue.edu}.} \and
    Kosuke Imai\thanks{Corresponding author. Edith and Benjamin Geisinger
    Professor, Department of Government and Department of Statistics,
    Harvard University, Cambridge, MA 02138. Phone: 617--384--6778, Email:
    \href{mailto:Imai@Harvard.Edu}{Imai@Harvard.Edu}, URL:
    \href{https://imai.fas.harvard.edu}{https://imai.fas.harvard.edu}.} }
\date{\today}
\maketitle\thispagestyle{empty}
%\tableofcontents\thispagestyle{empty}

\begin{abstract}
We develop the first statistical methodology for causal inference with video features as treatments. Video is the most engaging content modality on the internet. A central causal question is how audience reactions change in response to treatment features that unfold over the course of a video.
Unfortunately, standard causal inference methods are not applicable
because confounding features are latent, high-dimensional, and
dynamically related to both the treatment sequence and the outcome
trajectory. To address these challenges, we first reproduce each video
using a deep generative model and leverage the model's internal representations as learned, low-dimensional summaries of video content for causal estimation. 
We then establish that the average potential-outcome trajectory under dynamic stochastic interventions is nonparametrically identified. 
Lastly, we propose a consistent and asymptotically normal estimator based on a longitudinal neural network architecture. %and show how AI-bot responses can be incorporated to improve statistical efficiency.
We empirically validate our approach by constructing a new causal inference benchmark consisting of $10{,}000$ Super Mario Bros.\texttrademark{} levels played by fixed Mario AI agents, where ground-truth causal effects are known by construction. Finally, we apply our method to television advertisements from the 2020 U.S. presidential campaign and find that increasing the probability of a candidate appearing over time leads to higher average viewer evaluations.
With the proposed methodology, researchers can ask which visual features, appearing at which points in a video, influence audience responses, while benchmarking new methods against datasets with known ground-truth causal effects.
\end{abstract}

\noindent {\bf Key Words:} generative artificial intelligence, GenAI
powered inference, multimodal data, representation learning,
unstructured data

\newpage
\section{Introduction}
Video has emerged as a dominant medium of modern communication. Social media platforms such as Instagram and TikTok facilitate the widespread creation and dissemination of short-form video content, enabling billions of users to exchange information, ideas, and experiences. News organizations increasingly incorporate video content into their online reporting, while political campaigns use televised and online advertisements to reach voters. Reflecting the growing influence of video on information exposure and persuasion, communication researchers, political scientists, and other social scientists have increasingly sought to leverage video data to better understand human behavior \citep[e.g.,][]{boussalis2021gender, chan2024neural, chiossi2023short, coppock2020small, kalla2022outside, luo2024choosing, rittmann2025public, roozenbeek2022psychological, teixeira2012emotion, wittenberg2021minimal}.

Despite growing interest in video data, its potential remains underutilized. Existing studies typically estimate the effects of exposure to entire videos, overlooking the fact that videos consist of many features that vary across frames. Because social scientists are often interested in the effects of specific features, such as rhetorical elements, visual cues, or particular images, comparisons at the level of the whole video are often too substantively coarse and statistically inefficient. Recent advances in survey methodology have also made it possible to record respondents' real-time reactions as a video unfolds, generating longitudinal outcome data that capture how responses evolve during exposure \citep[e.g.,][]{maier2007reliability, maier2016breaking, waldvogel2020measuring, boussalis2021gender,ettensperger2023convince}. These data can improve statistical efficiency and enable richer counterfactual analyses of how the timing, duration, and sequencing of video features shape responses.

These settings offer substantial opportunities but also present several methodological challenges. First, video data are extremely high-dimensional. Even a short video contains many frames, each of which may encode visual, textual, auditory, and temporal information. As a result, the raw feature space is far too large to model directly given the sample sizes typically available in social science applications. A common dimension-reduction strategy for unstructured data, especially text, is to use low-dimensional representations such as pre-trained embeddings \citep[e.g.,][]{mikolov2013distributed, devlin_bert_2019, lewis2020bart}. For video data, however, such embeddings remain less developed, and it is often unclear how much substantively relevant information they discard.

Second, causal questions about specific video features are much more difficult to answer than questions about exposure to entire videos. Even when videos are randomly assigned to respondents, the features within a video are not themselves randomized. A treatment feature of interest may therefore be systematically correlated with other features that also affect viewers’ responses \citep{fong_causal_2023}. This problem is especially acute in dynamic settings, where video features often exhibit strong temporal dependence and earlier features can be highly predictive of later ones. Consequently, some counterfactual interventions may be implausible, such as changing a feature at one moment while holding fixed an otherwise unrealistic sequence of surrounding features.

Finally, real-time responses are inherently dynamic. A response recorded at a given moment rarely reflects only the video segment shown at that instant. Rather, viewers update their responses as the video unfolds, so a current response often reflects earlier cues, along with the inertia of prior ratings. Existing studies that use real-time response measures exploit this temporal granularity by aligning response trajectories with televised content and, in some cases, aggregating responses over predefined debate moments or speech phases \citep[e.g.,][]{maier2007reliability, maier2016breaking, waldvogel2020measuring, boussalis2021gender, ettensperger2023convince}. They typically do not, however, model the resulting sequential dependence and thus ignore the carryover effects of video frames.

In this paper, we introduce a methodology for causal inference with video data and real-time outcome measures using generative artificial intelligence (GenAI). Building on the recently proposed GenAI-powered inference (GPI) framework \citep{imai2026causal,imai2026leveraging,nakamura2026genai}, we use GenAI models to reconstruct entire videos and then leverage their internal representations for downstream statistical inference.

Recent advances in GenAI have enabled the generation of highly realistic video content. The internal representations used by these models to generate videos contain the information necessary to reproduce the underlying visual content while providing a substantially lower-dimensional representation that is more amenable to statistical analysis. Because these representations are optimized for video generation, they capture rich features of the original video.

A key advantage of this GPI approach is that it can be applied to existing real-world videos. Modern GenAI models can reconstruct such videos with remarkable fidelity, often producing outputs that are difficult for human observers to distinguish from the originals. This allows researchers to transform complex, high-dimensional video data into informative latent representations that can be used for causal inference and other downstream statistical tasks.

We use the extracted internal representations to identify causal effects for video features that unfold over time. Our framework is based on marginal structural models \citep{robi:hern:brum:00, blackwell_how_2018}, a causal inference framework for studying longitudinal data. The proposed methodology allows a respondent's response at any moment to depend on the prior sequence of videos and their own earlier responses. 

We define our causal estimands using dynamic stochastic interventions. Rather than considering a fixed treatment sequence, we design our dynamic intervention by changing treatment probabilities only in ways that remain consistent with the temporal patterns observed in actual videos. This helps avoid severe overlap violations that arise in high-dimensional longitudinal settings \citep[e.g.,][]{muno:vand:12,  rotnitzky2017multiply,  kennedy_nonparametric_2019, papa:etal:22}. 

In this setting, a key identification challenge is that the treatment sequence may be confounded by unobserved video characteristics, including tone, background imagery, speech content, audio features, and preceding visual context. We show that these confounding factors can be recovered from the internal representations of a GenAI model. Adjusting for the resulting deconfounder, which is a low-dimensional latent representation, enables identification of the entire counterfactual trajectory of real-time responses.

Based on this identification result, we develop an estimation and statistical inference procedure. We first learn the deconfounder using a longitudinal neural network architecture, which further reduces the dimensionality of the GenAI internal representations while retaining the information necessary for causal identification. We then use this learned deconfounder to estimate the target causal quantities of interest. 

Our framework builds on semiparametric theory and accommodates flexible machine learning methods for estimation, avoiding restrictive parametric assumptions about the data-generating processes governing video features and outcome responses. We also develop uniform confidence bands for the entire trajectory of counterfactual real-time responses, enabling researchers to quantify uncertainty and assess how causal effects evolve as a video unfolds.

We empirically validate our methodology on a novel benchmark dataset of 10,000 {\it Super Mario Bros.}\texttrademark{} gameplay videos, in which Mario is controlled by a fixed AI agent \citep{karakovskiy2012mario}. Our setup mirrors a campaign advertising application, where a candidate appears onscreen at various points, and the goal is to estimate how their presence influences viewers' responses. In our Mario environment, the treatment is defined as the appearance of Princess Peach, while the appearance of pipe obstacles serves as a confounder, which is by design strongly correlated with Princess Peach’s appearance. Our objective is to estimate the causal effect of Princess Peach's appearance on the number of jumps taken by the Mario agent. 

By construction, the true treatment effect is zero: although the treatment assignment is strongly confounded, Princess Peach is purely cosmetic and has no effect on gameplay. The Mario AI agent does not register her appearance, so even though the outcome (the Mario agent’s jumps) is generated by a highly complex policy, the ground-truth causal effect of Princess Peach’s appearance is exactly zero.
In this setting, our method accurately recovers the true null effect, whereas naive approaches spuriously attribute a positive effect to Princess Peach's appearance. Even adjusting for coarse pipe counts only partially mitigates this bias, since pipe counts do not capture the spatial layout and temporal evolution of pipe obstacles throughout the video.

% video application paragraph
Finally, we apply the proposed methodology to 849 television advertisements from the 2020 U.S. presidential campaign. We randomly assigned advertisements to respondents, who independently evaluated each advertisement in real time using a feeling thermometer ranging from 0 to 100. Using these continuous ratings, we estimate how the sequence of candidate appearances over the course of an advertisement affects viewers’ evaluations of the advertisement.

Because the timing of candidate appearances is likely to be strategically determined, credible causal inference requires adjusting for confounding visual and textual features that evolve throughout the video. Applying our methodology, we find that increasing the likelihood of candidate appearance over time leads to higher average potential ratings of the advertisement. More broadly, this application demonstrates how GenAI–powered dynamic causal inference can be used to study temporally structured interventions.

\paragraph{Related Literature.}
This paper contributes to the emerging methodological literature on causal and statistical inference with unstructured data, including text \citep[e.g.,][]{egami_how_2022, fong_discovery_2016, fong_causal_2023, gui_causal_2023, pryzant_causal_2021, mozer_matching_2020, roberts_adjusting_2020}, images \citep[e.g.,][]{jerzak_image-based_2023, torres_framework_2024}, and audio \citep[e.g.,][]{knox_dynamic_2021}. Despite these advances, video remains relatively underexplored as an object of causal inference because it combines high-dimensional, multimodal, and temporally dependent information. We address this challenge by extending the GPI framework \citep{imai2026causal,imai2026leveraging, nakamura2026genai} from static unstructured objects to dynamic video treatments with repeated real-time outcomes.

Our work also builds on the literature on causal inference for longitudinal data \citep[e.g.,][]{robi:94, robi:hern:brum:00, hernan2002estimating, blac:13, blackwell_how_2018}. Existing approaches to time-varying treatments typically rely on the assumption of sequential ignorability, which requires treatment assignment at each time point to be unconfounded conditional on the observed treatment, covariate, and outcome histories. In many social science applications, however, this assumption is difficult to justify because important confounders are often unobserved. Our setting differs from standard longitudinal observational studies in a fundamental way: the complete video sequence is observed. This enables us to leverage information contained in the video history to recover latent confounding features and adjust for their influence. By doing so, we relax the reliance on observed histories alone and broaden the scope of credible causal inference for dynamic, unstructured interventions.

The remainder of this paper is organized as follows. Section~\ref{sec::example} introduces our empirical application to 2020 U.S. presidential campaign advertisements. Section~\ref{sec:methods} formalizes the proposed methodological framework for video-as-treatment. Section~\ref{sec:mario} then validates the proposed methodology using a Super Mario Bros.\texttrademark{} environment. Section~\ref{sec:application} applies the method to the presidential campaign advertisement data. Finally, Section~\ref{sec:conclusion} concludes by summarizing the advantages and limitations of the proposed framework and outlining directions for future research.

\section{Empirical Application: Presidential Campaign Advertisement}\label{sec::example}

In this section, we briefly review the literature on campaign advertising and then describe how we measure the treatment (candidate appearance) and outcome (real-time evaluation) variables.  %We consider two versions of the outcome variable: one derived from human responses and the other from AI bot responses.

\subsection{Background}

Since the first political campaign commercials appeared on television in the 1950s, candidates have sought to influence election outcomes through paid media. During the 2024 election cycle, federal and gubernatorial campaigns spent more than \$5 billion on television and radio advertising \citep{ridout2025understanding}, along with an additional \$1.3 billion on digital ads across Meta and Google platforms alone \citep{fowler2025election}. These figures likely understate total political advertising expenditures, as they exclude spending by third-party groups and advertising on many other digital platforms \citep{sheingate2022digital}. Campaigns increasingly rely on digital advertising to target voters in narrowly defined geographic areas \citep{fowler2025election} and to persuade specific demographic groups under favorable conditions \citep{tappin2023quantifying}. Such microtargeted advertising has been shown to affect voter turnout among certain segments of the electorate \citep{aggarwal20232}.

We focus on the causal effects of candidate appearances on voter evaluation. Campaigns strategically decide when and how the candidate should appear in ads. \cite{banda2021candidate} analyze candidate-sponsored campaign advertising data collected by the Wisconsin Advertising Project from 2000--2006.  The authors find that candidates appear in a little more than half of presidential ads and approximately 45\% of congressional and gubernatorial campaign commercials. They argue that candidates show up in ads when their images will be most helpful. For example, %candidates are more likely to be seen in primary rather than general election commercials when party cannot be used to differentiate between opponents. Challengers and open seat candidates appear in ads more than incumbents since these candidates need to introduce themselves to voters. 
candidates are also less likely to appear in contrast and attack ads than they are in promotion ads to try and deflect any potential backlash. \cite{santia2025harnessing} also find candidates are more likely to appear in ads with positive emotions and less likely when negative emotions are expressed.

%There is also a gendered component to this topic with female candidates less likely to appear in presidential and gubernatorial contests but more likely in races for the House of Representatives \citep{banda2021candidate} while Latina candidates are more likely to appear when an issue is related to their ethnicity or gender \citep{santia2023intersection}. Although not limited to candidate appearance, \cite{erfort2026gendered} show that campaigns are strategic when deciding who should appear in an ad, finding commercials are more likely to feature women when ads are targeted towards female audiences on Facebook and Instagram. 

In this paper, we estimate the causal effects of candidate appearance within a campaign video on voter evaluation while adjusting for confounders such as content, tone, and other aspects of political ads.
We next describe our experimental design and data collection procedure. 

\subsection{Data and Measurement}
\label{subsec:experiment}

%For each video, two undergraduate students were asked to separately watch it and they evaluate the video using the feeling thermometer ranging from 0 to 100. Following our motivating example, our treatment of interest is the candidate appearance at each time point and the outcome is the average evaluation of the advertisement by the students.

%\paragraph{Video data.}
We use a dataset of 1,150 television advertisement videos aired during the 2020 U.S. presidential campaign from Wesleyan Media Project (WMP) \citep{fowler_political_2021}. Our analysis focuses on a subset of 849 ads that feature the favored candidate, as identified in the WMP codebook.\footnote{This question asks ``Is the favored candidate shown / mentioned in the ad?'' We only use ads where ``Favored candidate is ascertainable''. In the 2020 WMP Presidential Advertisements codebook (version 1), this would be \texttt{cand} equal to one.} The selected ads range in length from approximately 10 to 120 seconds, with a mean duration of 36.9 seconds, a median duration of 30 seconds, and a standard deviation of 14.4 seconds. The sample includes ads from major candidates such as Joe Biden, who accounts for 389 videos or 45.8\% of the sample, and Donald Trump, who accounts for 128 videos or 15.1\%.

Our unit of analysis is the video segment, with each segment lasting 5 seconds. We exclude the first and last segment from the analysis as the sponsor identification task might be unreliable for those segments. This yields a total of 4,391 video segments. Although other segment lengths are possible, we chose 5 seconds because political campaign advertisements typically do not contain rapid scene changes. Using shorter segments would increase temporal dependence across observations and computational cost, whereas longer segments could encompass multiple events, including the candidate's appearance and disappearance, within a single segment, making it difficult to isolate causal effects. 

%\paragraph{Measuring candidate appearance.}

\begin{figure}[t!]
  \centering
   \begin{subfigure}[t]{0.535\textwidth}
    \centering
    \includegraphics[width=\linewidth]{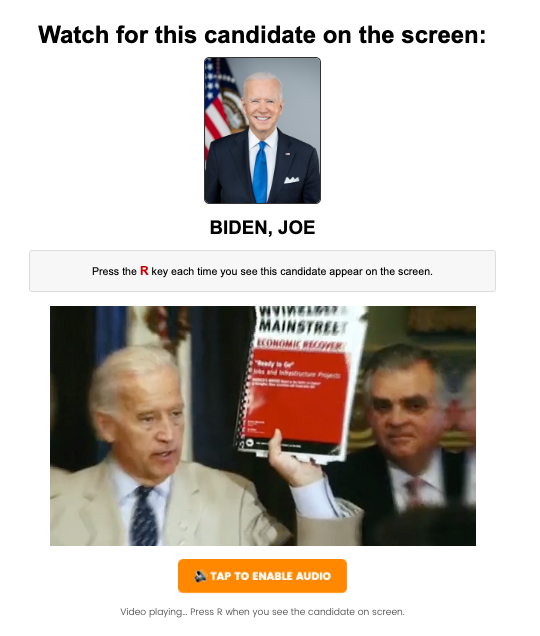}
    \caption{Candidate identification task}
    \label{fig:prolific1b}
  \end{subfigure}\hfill
  \begin{subfigure}[t]{0.465\textwidth}
    \includegraphics[width=\linewidth]{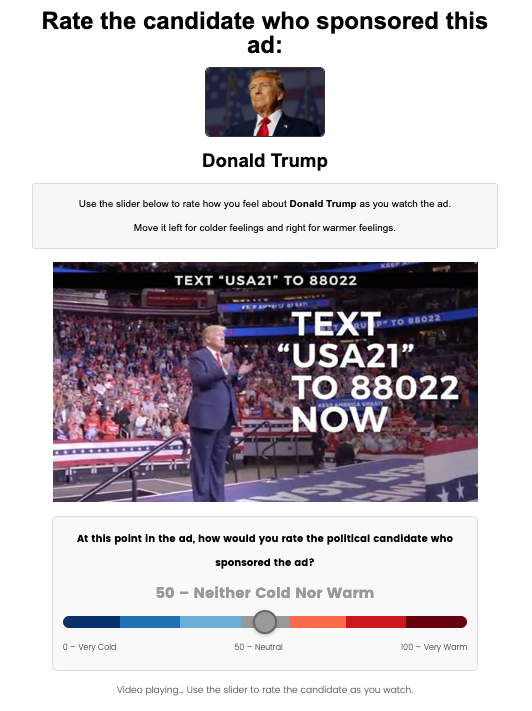}
    \caption{Rating task}
    \label{fig:prolific2b}
  \end{subfigure}
  \caption{Candidate Identification and Rating Tasks.  Panel~\ref{fig:prolific1b} shows the screenshot of the question used for the candidate identification task.  Respondents were asked to press the ``R'' key on their keyboard when the candidate appeared on the screen. An equivalent mobile version of the task was also provided. Panel~\ref{fig:prolific2b} shows the rating task, in which respondents were asked to rate each advertisement continuously and in real time on a 0--100 feeling thermometer by moving the dial as the ad played.}
  \label{fig:surveys}
\end{figure}

Our treatment of interest is the candidate appearance within each
video segment. We fielded a survey using the Prolific platform on June
22, 2026.  Our Prolific sample was restricted to respondents over the
age of 18 who were currently residing in the United States (U.S.) and
were fluent in English.  As shown in Figure~\ref{fig:prolific1b}, each
respondent was asked to complete a simple task of pressing the ``R''
key on their keyboard each time the favored candidate appeared on
screen. Prior to the main task, each respondent was randomly assigned
one of two training tasks. Each ad is assigned to at most five coders.
Appendix~\ref{app:surveys} provides additional details of this survey.
Using the results of this survey, we create the candidate appearance
indicator variable for each ad segment via a simple majority rule
among all coders. In total, 50.5\% of all video segments featured the
video's favored candidate, with Joe Biden appearing most (389 videos,
1282 segments).

We conducted a separate Prolific survey on June 26, 2026, in which
respondents evaluated the favored candidate in real time using a
feeling thermometer. As before, our sample was restricted to
individuals who were at least 18 years old, currently resided in the
U.S., and were fluent in English. To ensure a balanced sample, we used
stratified sampling by partisanship (Democrat, Republican, and
Independent) and gender (male and female). Respondents were screened
using the procedure described in Appendix~\ref{app:surveys}. The final
sample consisted of 1,108 unique respondents, each of whom rated an
average of 9.81 advertisements (SD = 0.61). Each advertisement was rated by an average of 12.8 unique respondents, with a standard deviation of 1.25 respondents. We targeted two raters from each partisanship–gender combination for every advertisement. In the end, each advertisement had an average of 2.13 raters per cell, and only eight advertisements lacked respondents in at least one partisanship–gender cell.

The dial position was recorded continuously throughout each advertisement, yielding a real-time rating for every video segment by taking the average of the ratings within each segment. Across all video segments, the mean rating was 50.8 and the standard deviation was 21.7, suggesting that respondents tended toward the midpoint of the 0--100 scale while exhibiting substantial variation in their evaluations. Indeed, respondents on average spent only about 30\% of each advertisement at the initial midpoint, and only 2.4\% of advertisement ratings (257 in total) exhibited no dial movement whatsoever. Moreover, none of the 1,108 respondents left the dial stationary across all of the advertisements they were assigned. These patterns indicate that respondents actively engaged with the task rather than leaving the dial fixed.

The substantive pattern of dial movements further suggests that respondents' evaluations reflected their political predispositions rather than random motion. Democratic respondents rated Democratic advertisements substantially more favorably than Republican advertisements (62.9 versus 26.8 on the 0--100 scale), while Republican respondents exhibited the mirror-image pattern (58.5 versus 41.6). Independents fell between the two partisan groups on both types of advertisement (53.1 and 37.1, respectively). 

The ordering across groups was perfectly monotonic: for Democratic
advertisements, mean ratings declined from Democrats (62.9) to
Independents (53.1) to Republicans (41.6), whereas for Republican
advertisements they declined from Republicans (58.5) to Independents
(37.1) to Democrats (26.8). The same pattern holds when attention is
restricted to advertisements featuring Joe Biden and Donald
Trump. Democrats rated Biden advertisements 35.5 points higher than
Trump advertisements, whereas Republicans rated Biden advertisements
19.6 points lower. Partisan differences of this magnitude provide
strong evidence that respondents were not merely moving the dial
randomly, but adjusting it in ways that reflected their political
predispositions.

\section{Methodological Framework} \label{sec:methods}

In this section, we introduce our proposed methodology for causal
inference with video data. Our framework extends the GPI methodology
of \cite{imai2026causal,imai2026leveraging} and \cite{nakamura2026genai}
to the {\it video-as-treatment} setting, where both the treatment and
outcome are continuously observed over time, motivated by the
empirical application in Section~\ref{sec::example}. However, the
proposed approach also applies more broadly to other dynamic treatment
settings, including text-based applications.

\subsection{Setup}

Consider a simple random sample of $N$ respondents drawn from a
population of interest. For each individual respondent
$i = 1, \ldots, N$, we randomly assign a video consisting of $S_i$
mutually exclusive and collectively exhaustive ordered segments, where
$S_i \in \cS := \{1,\ldots,s_{\max}\}$ represents the total number of
segments in the video assigned to individual $i$. 
The number of segments may vary across individuals, with $s_{\max} > 1$
denoting the maximum possible number of segments. 
Throughout the paper, we treat the segment count $S_i$ as fixed for each video, 
while allowing the video content within each segment to vary across videos.
Because videos of different lengths are randomly assigned to respondents, $S_i$ is a random variable. The random assignment of videos also ensures that each individual's exposure to video is unconfounded.

The length of segments may differ across applications, and the
appropriate granularity of segmentation depends on the substantive
context. In general, segments should be sufficiently fine-grained to
avoid aggregating multiple distinct actions into a single segment,
while remaining coarse enough to preserve meaningful variation in
information across segments.

Let $\bX_{is} \in \cX_s$ represent the $s$th segment of video assigned
to individual $i$, where $\cX_s$ denotes the support of the $s$th
segment $\bX_{is}$. For each individual, we measure the sequence of
outcomes $\overline{\bm Y}_i = (Y_{i1}, \cdots, Y_{iS_i})$ where
$Y_{is}$ represents the observed outcome for respondent $i$ for the
$s$th segment.  We adopt the potential outcome framework for
longitudinal causal inference \citep{neym:23,robins1986new, rubi:90}. Specifically, let $Y_{is}(\bx_1, \ldots, \bx_s)$ denote the
potential outcome at segment $s$ that would be realized if respondent
$i$ were exposed to the sequence of video segments
$(\bX_{i1}, \ldots, \bX_{is}) = (\bx_1, \ldots, \bx_s)$. We assume
that the observed outcome is given by
$Y_{is} = Y_{is}(\bX_{i1}, \ldots, \bX_{is})$. This consistency
assumption rules out spillover effects between respondents while
allowing for arbitrary carryover effects, so that the content of
earlier video segments may influence outcomes at later segments.

\begin{assumption}[Consistency]\label{consistency}
  For each interval $s=1,\ldots,S_i$, the observed outcome $Y_{is}$
  equals the potential outcome under the realized videos
  $\bX_{i1}, \ldots, \bX_{is}$, i.e.,
  $$
  Y_{is} =  Y_{is}(\bX_{i1}, \ldots, \bX_{is}).
  $$
\end{assumption}

\begin{assumption}[Random Assignment of Video]\label{randomization} The unstructured data $\bX_i$ is
  randomly assigned to each individual respondent so that
$$
   Y_{is}(\bx_1, \ldots, \bx_s) \ \indep \ \bX_{i} 
$$
holds for all $(\bx_1 , \ldots, \bx_s)$ given any $i=1,\ldots,N$ and $s=1,\ldots,S_i$.
\end{assumption}

We are interested in the causal effects of a sequence of features. For
simplicity, we focus on a binary treatment feature, denoted by
$W_{is}$, for segment $s$ of the video assigned to respondent $i$. We
assume that this treatment feature is an objective characteristic of
the underlying unstructured data and therefore takes the same value
for all respondents exposed to the same video segment. Following
\cite{imai2026causal}, we assume the existence of a mapping from each
video segment to the corresponding binary treatment indicator.

\begin{assumption}[Treatment feature]\label{treatment_feature}
  There exists a deterministic function $g_{W}: \cX_s \mapsto \{0,1\}$
  that maps each interval of the unstructured data $\bX_{is}$ to a
  binary treatment feature of interest, i.e.,
$$
W_{is} = g_{W}(\bX_{is}).
$$
\end{assumption}

Next, we define the confounding features, which comprise all features
of $\bX_{is}$ other than the treatment feature of interest $W_{is}$
that affect the outcome and future confounding features.  These confounding features are assumed to
lie in a much lower-dimensional space than the video segments
themselves.  Unlike previous GPI methods
\citep{imai2026causal,imai2026leveraging,nakamura2026genai}, however, we
allow the confounding features to vary across video segments for the
same individual. As noted earlier, our framework permits arbitrary
carryover effects, so that features of earlier segments may influence
outcomes measured at later segments. Specifically, for individual $i$,
segment $s$, and outcome $Y_{is'}$ with $s' \ge s$, we define the
corresponding confounding features as $\bU_{is}^{(s')}$. These
confounding features are assumed to be objective characteristics of
the assigned video. However, unlike the treatment feature $W_{is}$,
they are not directly observed and must instead be inferred from the
data.

\begin{assumption}[Confounding features]\label{confounding_feature}
There exists an unknown vector-valued deterministic function
  $\bg_{\bU_s}^{(s')}: \cX_s \mapsto \cU_s^{(s')}$ that maps each interval of
  unstructured data $\bX_{is} \in \cX_s$ to the confounding features
  $\bU_{is}^{(s')} \in \cU_s^{(s')}$, i.e.,
\begin{align*}
    \bU_{is}^{(s')} = \bg_{\bU_s}^{(s')}(\bX_{is})
\end{align*}
where $\mathrm{dim}(\bU_{is}^{(s')}) \ll \mathrm{dim}(\bX_{is})$. 
%$\bU_{is}$ captures all the relevant features of $\bX_{is}$ that influence outcome $Y_{is}$ such that
%\begin{align}
%    Y_{is} \ \indep \ \bX_{is} \ \mid \ W_{is}, \bU_{is}.
%\end{align}
\end{assumption}

Finally, we impose the following {\it sequential separability}
assumption \citep{nakamura2026genai}, which requires that the
treatment feature of each segment can be intervened upon without
altering the confounding features of that segment. In particular, the
confounding features must not be functions of the treatment feature,
thereby ruling out post-treatment bias.

\begin{assumption}[Sequential Separability]\label{separability} We have
\begin{align*}
    Y_{is}(\bx_1, \ldots, \bx_s)
    & \ = \ Y_{is}(g_W(\bx_1), \bg_{\bU_1}^{(s)}(\bx_1), \ldots, g_W(\bx_s), \bg_{\bU_s}^{(s)}(\bx_s) ) \\
    \bU_{is'}^{(s)}(\bx_1, \bx_2 \ldots, \bx_{s'-1}) & \ = \ \bU_{is'}^{(s)}(g_W(\bx_1), \bg_{\bU_1}^{(s)}(\bx_1), \ldots, g_W(\bx_{s'-1}), \bg_{\bU_{s'-1}}^{(s)}(\bx_{s'-1}))
\end{align*}
In addition, there exist no deterministic functions $\bg^\prime: \cX_s \to \cX_s^\prime$ and $\tilde{\bg}_{\bU_s}^{(s')}: \{0,1\} \times \cX_s^\prime \to \cU_s^{(s')}$, which satisfy $\bg_{\bU_s}^{(s')}(\bx) = \tilde{\bg}_{\bU_s}^{(s')} ( g_W(\bx), \bg^\prime(\bx))$ for all $\bx \in \cX_s$ and $\tilde{\bg}^{(s')}_{\bU_s} ( 1, \bg^\prime(\bx^\prime))\ne \tilde{\bg}^{(s')}_{\bU_s} (0, \bg^\prime(\bx^\prime))$ for some $\bx^\prime \in \cX_s$. 
\end{assumption}

Assumption~\ref{separability} consists of two components. The first
states that the potential outcome depends on each video segment
$\bX_{is}$ only through its treatment feature $W_{is}$ and confounding
features $\bU_{is}^{(s)}$. The second rules out the possibility that
the confounding features are determined by the treatment feature. We
express this condition in terms of deterministic functions rather than
stochastic relationships because, under the GPI framework, both the
treatment and confounding features are assumed to be deterministic
functions of the same underlying unstructured data. 

Unlike Assumptions~\ref{consistency}--\ref{confounding_feature}, the
second component of Assumption~\ref{separability} is not guaranteed by
the study design and may therefore be restrictive. Nevertheless, it
implies that the support of the treatment feature is independent of
the support of the confounding features at each segment and therefore
can be empirically tested \citep[see][for related results and
discussions in cross-sectional
settings]{wang_desiderata_2022,imai2026leveraging}.

\begin{proposition}[Sequential separability implies independent support]
\label{prop:seq_independent_support}
Suppose Assumption~\ref{separability} holds. Fix any
$1 \le s \le s' \le s_{\max}$. Define the joint support induced by the $s$th video segment as
\[
  \mathcal S_{WU,s}^{(s')}
  :=
  \left\{
  \left(g_W(\bx), \bg_{\bU_s}^{(s')}(\bx)\right):
  \bx \in \cX_s
  \right\}.
\]
Then, we have,
\[
  \mathcal S_{WU,s}^{(s')}
  =
  \mathcal W_s \times \mathcal U_s^{(s')}.
\]
\end{proposition}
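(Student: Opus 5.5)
The plan is a proof by contradiction that builds an explicit factorization of $\bg_{\bU_s}^{(s')}$ of the kind ruled out by the second part of Assumption~\ref{separability}. Throughout, I read $\mathcal W_s := \{g_W(\bx): \bx \in \cX_s\}$ and $\mathcal U_s^{(s')} := \{\bg_{\bU_s}^{(s')}(\bx): \bx\in\cX_s\}$ as the images (supports) of the two maps. If $\cU_s^{(s')}$ were meant as a larger codomain, the equality could fail trivially, so the images are the natural reading.

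The inclusion $\mathcal S_{WU,s}^{(s')} \subseteq \mathcal W_s \times \mathcal U_s^{(s')}$ is immediate from these definitions. For the reverse inclusion, suppose there is a pair $(w_0,\bm u_0) \in \mathcal W_s \times \mathcal U_s^{(s')}$ that is not in $\mathcal S_{WU,s}^{(s')}$. Since $\bm u_0$ lies in the image, some $\bx'\in\cX_s$ has $\bg_{\bU_s}^{(s')}(\bx') = \bm u_0$. Because $(w_0,\bm u_0)$ is not attained, this $\bx'$ must satisfy $g_W(\bx') = 1-w_0$. In particular, $\mathcal W_s=\{0,1\}$.

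Next I construct the forbidden factorization. Take $\cX_s' := \cU_s^{(s')}$ and $\bg'(\bx) := \bg_{\bU_s}^{(s')}(\bx)$. Define $\tilde{\bg}_{\bU_s}^{(s')}(w,\bm v) := \bm v$ for every $(w,\bm v) \neq (w_0,\bm u_0)$, and set $\tilde{\bg}_{\bU_s}^{(s')}(w_0,\bm u_0) := \bm u_1$ for some $\bm u_1 \neq \bm u_0$ in the codomain. If the codomain is a single point, then $\mathcal U_s^{(s')}$ is a singleton and the missing pair cannot exist, so the claim holds trivially. The two requirements are then checked directly.
\begin{itemize}
\item \emph{Factorization.} No $\bx$ has $(g_W(\bx),\bg'(\bx)) = (w_0,\bm u_0)$, so $\tilde{\bg}_{\bU_s}^{(s')}(g_W(\bx),\bg'(\bx)) = \bg_{\bU_s}^{(s')}(\bx)$ for all $\bx\in\cX_s$.
\item \emph{Dependence on $w$.} At the point $\bx'$ we have $\tilde{\bg}_{\bU_s}^{(s')}(1-w_0,\bg'(\bx')) = \bm u_0 \neq \bm u_1 = \tilde{\bg}_{\bU_s}^{(s')}(w_0,\bg'(\bx'))$. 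Hence $\tilde{\bg}_{\bU_s}^{(s')}(1,\bg'(\bx')) \neq \tilde{\bg}_{\bU_s}^{(s')}(0,\bg'(\bx'))$.
\end{itemize}
This is exactly the pair of functions whose existence Assumption~\ref{separability} forbids, which gives the contradiction.

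The main obstacle is making this construction legitimate within the assumption as stated. The condition quantifies over arbitrary deterministic $\bg'$ and arbitrary codomains $\cX_s'$, so the "trivial" choice $\bg' = \bg_{\bU_s}^{(s')}$ should be admissible. The argument also only needs to modify $\tilde{\bg}$ on an off-support point, so nothing about the actual data-generating process changes. I would state these interpretive points, including the image convention and the singleton edge case, explicitly before the main argument. The construction works for each fixed $s \le s'$, which yields the result for all such pairs.
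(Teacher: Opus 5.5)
Your proof is correct. The paper gives no argument of its own for this proposition (it defers to \cite{nakamura2026genai}), so there is nothing in the text to compare against. Your contrapositive construction is the natural direct argument from the second clause of Assumption~\ref{separability}: you set $\bg' = \bg_{\bU_s}^{(s')}$ and let $\tilde{\bg}_{\bU_s}^{(s')}$ return its second argument except at the single unattained pair $(w_0,\bm u_0)$. The factorization, the dependence on $w$ at $\bx'$, and the singleton case all check out as you state.

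You single out the legitimacy of the construction as the main obstacle, so note that the same off-support freedom proves considerably more. Take $\cX_s' = \cX_s$ and $\bg'$ the identity. Set $\tilde{\bg}_{\bU_s}^{(s')}(w,\bx) := \bg_{\bU_s}^{(s')}(\bx)$ if $w = g_W(\bx)$, and any other element of $\cU_s^{(s')}$ otherwise. This factorizes $\bg_{\bU_s}^{(s')}$ and depends on $w$ at every $\bx'$, even when the joint support \emph{is} a product.

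Hence the second clause, read literally, can hold only when $\cU_s^{(s')}$ is a single point, and the proposition is true essentially vacuously. Its real content is as a formalization of what separability is meant to say.

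For the same reason, your remark that a codomain reading of $\cU_s^{(s')}$ would make the equality fail is moot: a codomain strictly larger than the image already violates the hypothesis. The image convention for $\mathcal W_s$, by contrast, does matter in the degenerate case where $g_W$ is constant on $\cX_s$.

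None of this affects the validity of your argument as a derivation from the assumption as written.
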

See \cite{nakamura2026genai} for the proof.
Proposition~\ref{prop:seq_independent_support} establishes that, under
sequential separability, any feasible treatment value and any feasible
confounder value can jointly occur within a given segment. This is
precisely the support condition required for positivity conditional on
the confounding features, as it ensures that both treatment levels are
attainable for every confounder value.

\begin{figure}[t]
  \centering
    \includegraphics[width=1.0\linewidth]{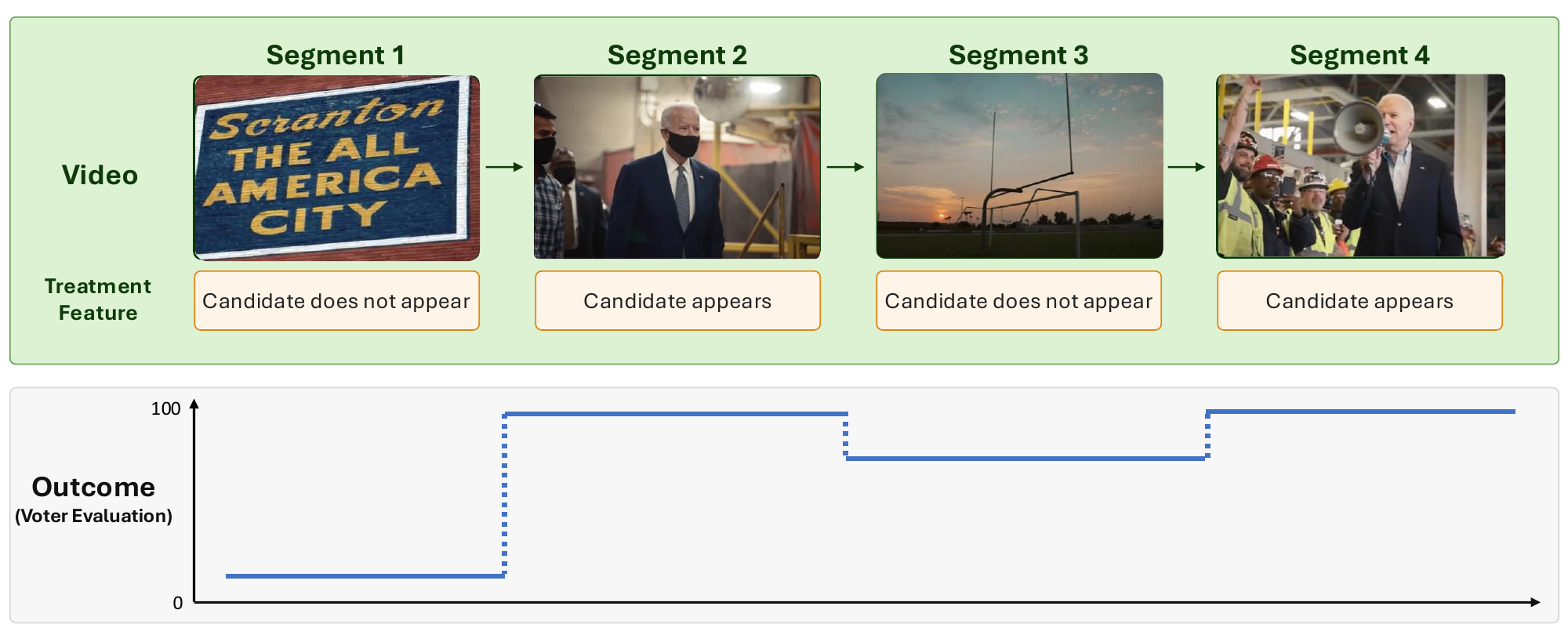}
    \caption{Schematic illustration of our experimental setup. We
      segment each video into short clips. For each segment, we
      observe the treatment feature (e.g., whether the candidate
      appears) and the outcome. Our goal is to estimate the causal
      effects of the sequence of treatment features on the sequence of
      outcomes, while adjusting for the unobserved confounding
      features of each segment. Importantly, our framework
      accommodates arbitrary carryover effects.}
    \label{fig:setup}
\end{figure}

Figure~\ref{fig:setup} schematically illustrates our experimental
setup. Importantly, the framework accommodates arbitrary carryover
effects, both short- and long-term. In our empirical application, for
example, a candidate's appearance in an early segment may affect a
respondent's evaluation at later segments, while earlier evaluations
may themselves shape subsequent evaluations. As in the standard
framework for longitudinal causal inference with time-varying
treatments (e.g., \citealt{robi:hern:brum:00, blackwell_how_2018}),
our approach allows for such dynamic effects, thereby broadening its
applicability.  The key distinction from the standard setting,
however, is that the confounding features are not directly observed
and therefore must be learned from the video data.

\subsection{Leveraging Deep Generative Models}

The primary challenge for causal inference with video data is its
extremely high dimensionality. Even a short video clip contains
numerous frames, each consisting of thousands of pixels, yielding a
feature space that far exceeds any feasible sample size for
statistical analysis. As a result, any viable analytical strategy must
reduce dimensionality while preserving the information relevant for
causal inference.

The GPI framework addresses this challenge by leveraging a deep
generative model to represent unstructured data, rather than relying
on an external embedding whose information loss is generally unknown
\citep{imai2026causal,imai2026leveraging, nakamura2026genai}. Specifically,
GPI generates---or, in the case of existing videos, reproduces---each
clip using an open-source generative model and extracts the model's
internal representations. Because these representations are used to
generate the video itself, they contain the information necessary to
reconstruct it, thereby providing a principled low-dimensional
representation for subsequent statistical inference. Indeed,
researchers can eliminate information loss altogether by using the
reproduced video clips in experiments. In our empirical application,
we demonstrate that videos regenerated using a state-of-the-art
video generation model are visually indistinguishable from the original
videos.

The key object is the internal representation from which the video is
generated. We make this notion precise with the following general
definition of a deep generative model. 

\begin{definition}[Deep Generative Model]\label{deep_use}
  \spacingset{1} A deep generative model is the following
  probabilistic model that takes an input $\bP_{i}$ (e.g., prompts or
  video clips) and generates the treatment object $\bX_{i}$ as an
  output:
$$
\begin{aligned}
&\P(\bX_{i}  \mid \bh_{\bgamma}(\bR_{i}))\\
&\P(\bR_{i} \mid \bP_{i})
\end{aligned}
$$
where $\bR_{i} \in \cR \subset \R^{D_R}$ denotes an observable
internal representation of $\bX_{i}$ contained in the model and
$\bh_{\bgamma}(\bR_{i})$ is a deterministic function parameterized by
$\bgamma$ that completely characterizes the conditional distribution
of $\bX_{i}$ given $\bR_{i}$.
\end{definition}
Under this definition, $\bR_i$ is a lower-dimensional representation
of $\bX_i$ that corresponds to a latent representation within the deep
generative model. This formulation encompasses a broad class of video
generation models, including diffusion-based architectures
\citep[e.g.,][]{agarwal2025cosmos}. As discussed earlier, when the
goal is to analyze existing video data rather than generate new
content, we first reproduce the videos using an appropriately prompted
deep generative model and then extract the corresponding latent
representations.

We assume that each video segment $\bX_{is}$ is generated directly
from a corresponding latent representation, denoted by $\bR_{is}$. In
addition, we assume that the output layer of the deep generative model
is a deterministic function of $\bR_{is}$. Under these assumptions,
the relevant low-dimensional features of the video segment, i.e., the
treatment feature $W_{is}$ and the confounding features
$\{\bU_{is}^{(s')}\}_{s' \ge s}$, can be expressed as deterministic
functions of the latent representation $\bR_{is}$. We formalize these
assumptions below.

\begin{assumption}[Factorized Deterministic Decoding]\label{det_dec} The output
  layer of a deep generative model is deterministic for a given video
  interval. That is, given the number of segments $S_i$,
\begin{align*}
   \P(\bX_{i} \mid \bh_{\bgamma}(\bR_i)) = \prod_{s = 1}^{S_i} \P(\bX_{is} \mid \bh_{\bgamma}(\bR_{is})) 
\end{align*}
and $\P(\bX_{is} \mid \bR_{is})$ is degenerate for any respondent $i$ and segment $s$.
\end{assumption}
State-of-the-art video generation models are typically
diffusion-based architectures that generate an entire video clip
jointly rather than frame by frame. To ensure that
Assumption~\ref{det_dec} holds at the segment level, we partition each
video clip into smaller segments, reproduce each segment $\bX_{is}$
independently using a deep generative model, and extract the latent
representation from the final layer immediately preceding the
decoder. 

\begin{figure}[t!]
    \centering
    \includegraphics[width=1.0\linewidth]{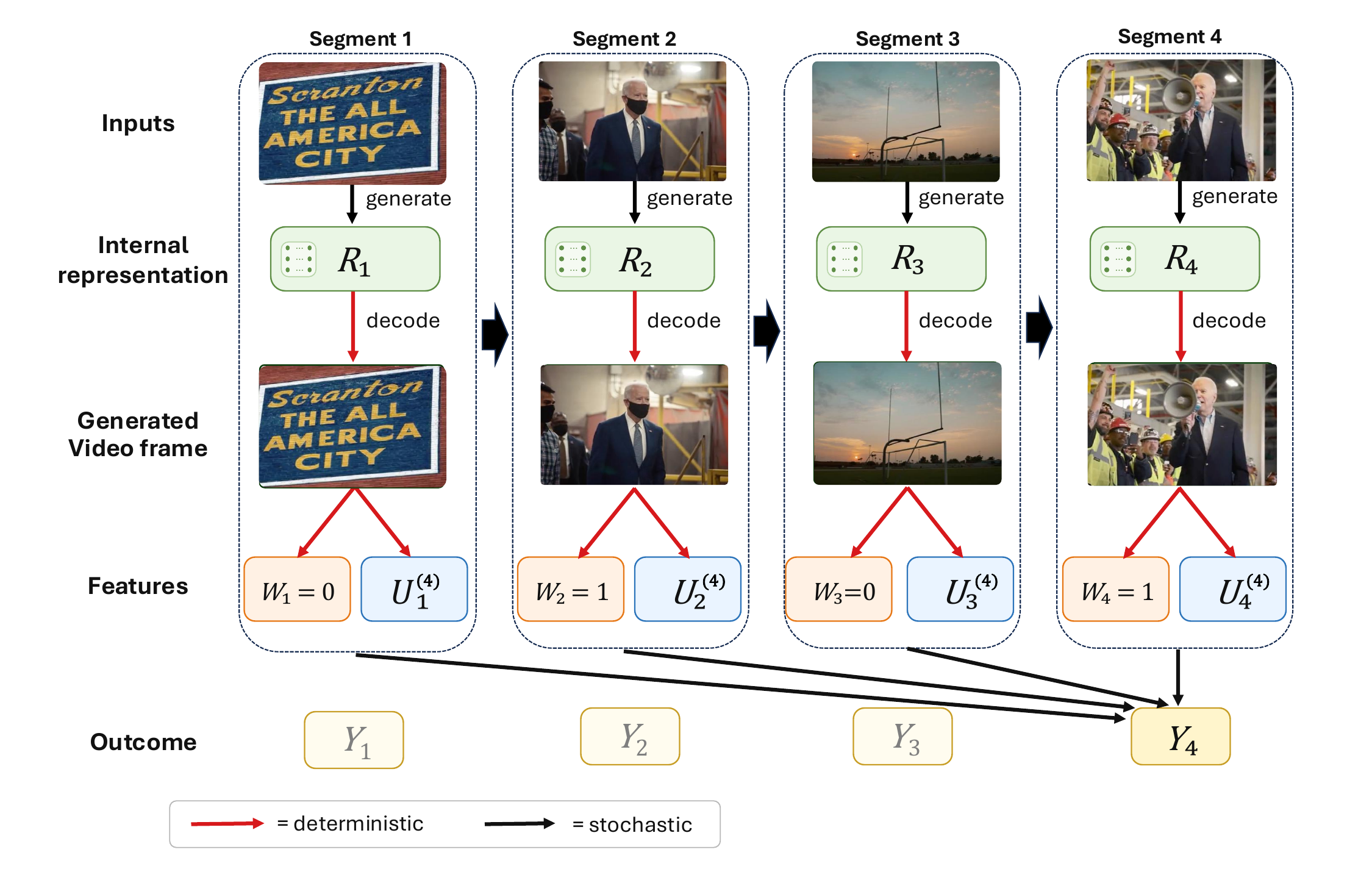}
    \caption{Illustration of the assumed data generating process when
      we are interested in the outcome $Y_4$.  A red arrow with double
      lines represents a deterministic causal relation while a black
      arrow with a single line indicates a possibly stochastic
      relationship. The same framework applies to any other outcome
      $Y_s$, though the relevant confounding features may differ
      across the target outcomes.}
\label{dag}
\end{figure}

Figure~\ref{dag} illustrates our data-generating process and the assumptions
described above. In this illustration, an arrow with red double lines represents
a deterministic causal relation, whereas an arrow with a single black line
represents a possibly stochastic causal relation. At each segment
$s \in \{1, \cdots, s_{\max}\}$, we regenerate the entire video using the deep
generative model and extract the internal representation $\bR_{is}$. Each video
segment can influence the outcome through both the observed treatment features
and the unknown confounding features. However, under
Assumptions~\ref{confounding_feature}~and~\ref{det_dec}, the
confounding features are a deterministic function of the internal representation,
which is lower-dimensional and computationally feasible to model with finite
data.
Since the confounding features defined in
Assumption~\ref{confounding_feature} capture all the relevant features
of each video segment that influence the corresponding outcome and
future confounding features together with the treatment feature, they
are assumed to satisfy
\begin{equation}
\begin{aligned}
      Y_{is} \ & \indep \ \{\bR_{i1}, \cdots, \bR_{is}\} \mid W_{i1},
  \cdots, W_{is}, \bU_{i1}^{(s)}, \cdots, \bU_{is}^{(s)}, S_i \geq s \\
  \bU_{i,s'+1}^{(s)} & \indep \
  \{\bR_{i1}, \cdots, \bR_{is'}\} \mid W_{i1}, \cdots, W_{is'},
  \bU_{i1}^{(s)}, \cdots, \bU_{is'}^{(s)}, S_i \geq s \label{eq:observe_independence}
\end{aligned}
\end{equation}
for any $s \in \cS$ and $s' < s$.

\subsection{Causal Estimand under Dynamic Stochastic Interventions} \label{sec::estimand}

We conduct causal inference under a stochastic intervention
\citep[e.g.,][]{muno:vand:12, kennedy_nonparametric_2019,
  papa:etal:22}. We focus on stochastic rather than
deterministic interventions because the latter are often prone to
violations of the positivity assumption. In our application, for
example, the treatment feature indicates whether a candidate appears
on screen, a characteristic that typically exhibits strong temporal
persistence across adjacent segments. If a candidate appears in one
segment, they are likely to appear in the next.  Thus, a deterministic
intervention that forces the candidate to appear in every segment
except one would induce a highly implausible treatment sequence that
may have little or no support in the observed data, thereby violating
the positivity assumption required for identification.

Formally, we consider the following dynamic stochastic intervention
developed in \cite{nakamura2026genai} based on
\cite{kennedy_nonparametric_2019}:
\begin{align}
q_s(\delta_s; \overline{\bm{w}}_{s-1}) := 
\frac{\delta_s p_s(\overline{\bm{w}}_{s-1}) }{\delta_s
  p_s(\overline{\bm{w}}_{s-1}) + 1 -
  p_s(\overline{\bm{w}}_{s-1})}, \label{def:stochastic}
\end{align}
for each segment $s=1,\ldots,S_i$, where
$p_s(\overline{\bm{w}}_{s-1}) = \P(W_{is} = 1 \mid
\overline{\bm{W}}_{i,s-1} = \overline{\bm{w}}_{s-1}, S_i \geq s)$ denotes the
observed treatment assignment probability. Note that
$\overline{\bm{W}}_{i,s-1} = (W_{i1}, \ldots, W_{i,s-1})$ represents
the past treatment history, whose value is given by the vector
$\overline{\bm{w}}_{s-1} = (\bm{w}_1, \ldots, \bm{w}_{s-1})$ for
$s > 1$ with $\overline{\bm{W}}_{i0} = \emptyset$.

Here, the user-specified incremental parameter
$\delta_s \in (0, \infty)$ determines the extent to which the
counterfactual treatment assignment probabilities diverge from the
observed ones, where a greater value of $\delta_s$ corresponds to a
policy that assigns treatment with a higher probability. Indeed,
$\delta_s$ equals the odds ratio:
$$\delta_s \ = \ \frac{q_s(\delta_s; \overline{\bm{w}}_{s-1}) }{1 -
  q_s(\delta_s; \overline{\bm{w}}_{s-1}) } \cdot \frac {1 -
  p_s(\overline{\bm{w}}_{s-1}) }{p_s(\overline{\bm{w}}_{s-1}) }.$$
This dynamic stochastic intervention is identical to the incremental
propensity score intervention of \citet{kennedy_nonparametric_2019},
except that we use the observed treatment probability conditional on
treatment history rather than the propensity score. This modification
is necessary since we do not observe the confounding features
directly.

Under this stochastic intervention, we consider the average potential
outcome aggregated over all segments, which is formally defined as
\begin{align}
    \bar\Psi(\bm{\delta}) := \E\Biggl[ \sum_{s = 1}^{S_i} \Biggl\{ \int_{\mathcal{W}^{s}} Y_{is}(W_{i1} = w_1, \ldots, W_{is} = w_s) 
  \prod_{s'= 1}^{s} d Q_{s'}(w_{s'}; \overline{\bm{w}}_{s'-1}, \delta_{s'}) \Biggr\} \Biggr], \label{target_quantity}
\end{align}
where $\bm\delta = (\delta_1, \cdots, \delta_{s_{\max}})$ and
$$
dQ_s(w_s; \overline{\bm{w}}_{s-1}, \delta_s) = \frac{w_s \delta_s
  p_s(\overline{\bm{w}}_{s-1}) + (1 - w_s) (1 -
  p_s(\overline{\bm{w}}_{s-1}) ) }{\delta_s p_s(\overline{\bm{w}}_{s-1}
  )+ 1 - p_s(\overline{\bm{w}}_{s-1})}.
$$
We omit confounding features in Equation~\eqref{target_quantity} from the potential outcome for notational simplicity. 
%Throughout the paper, we use the following notation
%\begin{align*}
%    \bm\delta_{\underline{s}} \ := \ \bm\delta_{(s+1):s_{\max}}, \qquad
%    \bm\delta_{\overline{s}} \ := \ \bm\delta_{1:s}, \qquad
%    \bm\delta \ = \ (\bm\delta_{\overline{s}}, \bm\delta_{\underline{s}}), \qquad
%    \bm\delta_{-s} \ := \ (\delta_1, \ldots, \delta_{s-1}, \delta_{s+1}, \ldots, \delta_{s_{\max}}).
%\end{align*}

This estimand summarizes the expected cumulative outcome under a
dynamic stochastic intervention indexed by the incremental parameter
$\bm\delta$. Under this intervention, the treatment features are
generated from the modified treatment assignment mechanism, whereas
the confounding features are left to evolve according to their natural
data-generating process. The inner integral averages the potential
outcome over possible treatment histories, with each history weighted
by the intervention distribution that increasingly favors treatment as
$\delta_s$ becomes larger. The outer expectation then averages this
quantity over the distribution of confounding histories, potential
outcomes, and the number of observed segments $S_i$.

The stochastic intervention defined in Equation~\eqref{def:stochastic}
mitigates positivity violations by avoiding treatment values that are
unsupported by the observed data. Specifically, whenever the observed
treatment probability equals 0 or 1, the intervention probability is
set to the same value regardless of $\delta$. In our application, for
example, if a candidate always appears (or never appears) at a given
segment in the observed data, the intervention preserves that
probability rather than imposing a treatment path that lacks empirical
support.

As a result, identification requires only a relative form of
positivity. Formally, we denote the history of confounding features
for outcome $Y_{is}$ up to segment $s$ by
$\overline{\bm{U}}_{is}^{(s)} = (\bU_{i1}^{(s)}, \ldots,
\bU_{is}^{(s)})$ for $s \ge 1$ and
$\overline{\bm{U}}^{(s)}_{i0}=\emptyset$.  Then, we can define the
propensity score at segment $s$ as
$\pi_s(\overline{\bm{W}}_{i,s-1}, \overline{\bm{U}}^{(s)}_{is}) :=
\P(W_{is} = 1 \mid \overline{\bm{W}}_{i,s-1},
\overline{\bm{U}}^{(s)}_{is})$, which represents the conditional
probability of treatment assignment given the treatment up to $s-1$ and confounder
histories up to interval $s$. We formalize this assumption as
follows.

\begin{assumption}[Bounded relative overlap]\label{overlap} There exists a constant $c > 0$ such that 
\begin{align*}
\pi_s(\overline{\bm{w}}_{s-1}, \overline{\bm{u}}_{s}) \ge c \cdot p_s(\overline{\bm{w}}_{s-1}) \quad \text{and} \quad  1 - \pi_s(\overline{\bm{w}}_{s-1}, \overline{\bm{u}}_{s}) \ge c \cdot (1 -  p_s(\overline{\bm{w}}_{s-1}))
\end{align*}
for all $s=1,\ldots,S_i$, $i=1,\ldots,N$,
$\overline{\bm{w}}_{s-1} = (\bm{w}_1, \ldots, \bm{w}_{s-1}) \in
\mathcal{W}^{s-1}$, and
$\overline{\bm{u}}_{s} = (\bm{u}_1, \ldots, \bm{u}_{s}) \in
\bar{\mathcal{U}}^{(s)}$ where
$\bar{\mathcal{U}}^{(s)} = \prod_{s^\prime = 1}^{s}
\cU^{(s)}_{s^\prime}$. 
\end{assumption}
Importantly, Assumption~\ref{overlap} is closely connected to
Proposition~\ref{prop:seq_independent_support}, as both are statements
about positivity given the confounding features at each segment. This
connection suggests a practical diagnostic for our non-design-based
requirements: although Assumptions~\ref{separability}
and~\ref{overlap} are not guaranteed by design, we can still assess
them by inspecting the distribution of the estimated propensity
scores.

\subsection{Identification}

We establish identification of the average potential outcome under stochastic intervention in Equation~\eqref{target_quantity}. 
This result is a natural extension
of Theorem~1 of \cite{nakamura2026genai}, which identifies the causal
effect on a single outcome at a given segment under the same dynamic
intervention. For a single outcome, identification holds because
adjusting for the full history of latent confounding features suffices
to block all backdoor paths between the treatment sequence and the
outcome of interest (see Figure~\ref{dag}). We can thus establish identification for each segment-specific outcome separately and then aggregate these identified quantities across segments to obtain Equation~\eqref{target_quantity}. The key challenge here is that the confounding features are latent and must be inferred from the data. 

Suppose first that the confounding features
$\overline{\bU}_{is}^{(s)}$ were observed. Under
Assumptions~\ref{consistency}--\ref{overlap}, the sequential g-formula
identifies the average potential outcome at segment $s$ among
respondents with $S_i \geq s$ as
\begin{equation}
\begin{aligned}
\Psi_s(\bm\delta)
=\;&
\int_{\bar{\mathcal U}^{(s)}}
\int_{\mathcal W^s}
\E\left[
Y_{is}
\mid
\overline{\bW}_{is}=\overline{\bm w}_s,
\overline{\bU}_{is}^{(s)}=\overline{\bm u}_s,
S_i\geq s
\right]
\\
&\quad\times
\prod_{t=1}^{s}
\left\{
dQ_t(w_t;\overline{\bm w}_{t-1},\delta_t)
\,dF\left(
\bm u_t
\mid
\overline{\bW}_{i,t-1}=\overline{\bm w}_{t-1},
\overline{\bU}_{i,t-1}^{(s)}=\overline{\bm u}_{t-1},
S_i\geq s
\right)
\right\}.
\end{aligned}
\label{eq:gformula_U}
\end{equation}
Applying this
identification argument separately to each outcome and aggregating
across segments yields
\begin{align*}
\bar\Psi(\bm\delta)
=
\sum_{s=1}^{s_{\max}}
\P(S_i\geq s)\Psi_s(\bm\delta).
\end{align*}

In our setting, however, the confounding features are not observed.
Instead, under Assumptions~\ref{confounding_feature}
and~\ref{det_dec}, they are unknown deterministic functions of the
observed internal representations $\bR_{it}$. Identification does not
require recovering these functions exactly. Rather, for each target
outcome $Y_{is}$, we seek low-dimensional functions called \emph{deconfounder} 
$\boldf_t^{(s)}:\mathcal R\mapsto\mathcal F\subset\R^{D_f}$,
$t=1,\ldots,s$, that can replace the confounding features in the
sequential g-formula without changing its value. These
representations need only preserve the conditional means relevant
for identification, rather than all information contained in the
confounding features.

To formalize this idea, we first define the outcome model given the treatment and deconfounder history as
\begin{align}
\mu_{s}
\left(
\overline{\bm w}_{s},
\boldf^{(s)}_1(\bR_{i1}),\ldots,\boldf^{(s)}_{s}(\bR_{is})
\right)
&:=
\E[Y_{is} \mid \overline{\bW}_{is} = \overline{\bm w}_{s}, \boldf_1^{(s)}(\bR_{i1}), \ldots,
\boldf_{s}^{(s)}(\bR_{is}), S_i \geq s ]. \label{eq:outcome_model}
\end{align}
We also define the conditional mean function under the stochastic intervention,
\begin{equation}
\begin{aligned}
&m_{s' | s}
\left(
\overline{\bm w}_{s'},
\boldf_1(\bR_{i1}),\ldots,\boldf_{s'}(\bR_{is'});
\bm\delta_{(s'+1):s}
\right) \\
:= \ & \int_{\mathcal R^{S_i-s'}} \int_{\mathcal W^{S_i-s'}} \ \mu_{s}
\left(
\overline{\bm w}_{s},
\boldf^{(s)}_1(\bR_{i1}),\ldots,\boldf^{(s)}_{s}(\bR_{is})
\right)\\
&\qquad \times  \prod_{t = s'+1}^{s} d Q_{t}
(w_{t};\overline{\bm w}_{t-1},\delta_{t}) dF(\bR_{it} \mid \overline{\bm{W}}_{i,t-1} = \overline{\bm{w}}_{t-1},  \boldf_1^{(s)}(\bR_{i1}), \cdots, \boldf^{(s)}_{t-1}(\bR_{i,t-1}), S_i\geq s)
\end{aligned} \label{eq:recursive_m}
\end{equation}
for $s'=0,\ldots,s-1$. Substantively, $m_{s'|s}$ represents the expected outcome at $s$th segment when all future treatments from the ($s'+1$)th segment onward are generated according to the stochastic intervention, whereas $\mu_s$ represents the observed outcome model under deconfounder and treatment history. Thus, if the functions
$\boldf_{s'}^{(s)}(\bR_{is'})$ retain the same confounding information as
$\bU_{is'}^{(s)}$, the resulting $m_{0\mid s}$ identifies the average potential outcome for segment $s$.

Following \cite{nakamura2026genai}, we characterize a deconfounder
through mean-independence conditions for the outcome regression
$\mu_{s}$ and the intermediate outcome regressions $m_{s'\mid s}$
appearing in the sequential g-formula.

\begin{definition}[Deconfounder]
A collection of functions
$\boldf_s^{(s')}:\mathcal{R}\mapsto\mathcal{F}\subset\R^{D_f}$ is a deconfounder if it satisfies the following
mean-independence conditions:
\begin{equation}
\begin{aligned}
\E[Y_{is} \mid \boldf_1^{(s)}(\bR_{i1}), \ldots,
\boldf_{s}^{(s)}(\bR_{is}), \overline{\bW}_{is}, S_i \geq s] = \E[Y_{is} \mid
\boldf_{1}^{(s)}(\bR_{i1}),\ldots,\boldf_{s}^{(s)}(\bR_{is}),
\overline{\bW}_{is}, \overline{\bR}_{is}, S_i \geq s] \label{deconfounder_independence}
\end{aligned}
\end{equation}
and, for any pair of deconfounder functions $\boldf_{s'}^{(s)}$ and
$\tilde{\boldf}_{s'}^{(s)}$ (which may coincide so that
$\boldf_{s'}^{(s)}=\tilde{\boldf}_{s'}^{(s)}$),
\begin{equation}
\begin{aligned}
&\E\left[m_{s'+1 \mid s}\left(\overline{\bm{w}}_{s'+1},
  \tilde{\boldf}_1^{(s)}(\bR_{i1}), \ldots,
  \tilde{\boldf}_{s'+1}^{(s)}(\bR_{i,s'+1});
\bm\delta_{(s'+1):s}\right) \ \middle| \
  \boldf_1^{(s)}(\bR_{i1}), \ldots, \boldf_{s'}^{(s)}(\bR_{is'}),
  \overline{\bW}_{is'}, S_i \geq s\right]\\
&= \E\left[m_{s'+1 \mid s}\left(\overline{\bm{w}}_{s'+1},
  \tilde{\boldf}_1^{(s)}(\bR_{i1}), \ldots,
  \tilde{\boldf}_{s'+1}^{(s)}(\bR_{i,s'+1});
\bm\delta_{(s'+1):s}\right) \ \middle| \
  \boldf_1^{(s)}(\bR_{i1}), \ldots, \boldf_{s'}^{(s)}(\bR_{is'}),
  \overline{\bR}_{is'}, \overline{\bW}_{is'}, S_i \geq s\right],
  \label{pseudo_outcome_independence}
\end{aligned}
\end{equation}
for any $s\in\cS$ with $s>1$, $s'=1,\ldots,s-1$,
$\overline{\bm w}_{s'}\in\mathcal W^{s'}$, and any value of the
incremental parameter $\bm\delta\in\R_+^{s_{\max}}$, where
$\overline{\bR}_{is'}=(\bR_{i1},\ldots,\bR_{is'})$ and
$\overline{\bR}_{i0}=\emptyset$. 
%Here, $m_{s+1}$ denotes the outcome models, which are recursively defined as
%\begin{align*}
%m_s
%\left(
%\overline{\bm w}_s,
%\boldf_1(\bR_{i1}),\ldots,\boldf_s(\bR_{is}); \bm\delta_{\underline{s}}
%\right)
%&:=
%\E\left[
%Y_i
%\mid
%\overline{\bW}_{is}=\overline{\bm w}_s,
%\boldf_1(\bR_{i1}),\ldots,\boldf_s(\bR_{is}),
%S_i=s
%\right],
%\end{align*}
%and, for $s'=1,\ldots,s-1$, 
%\begin{equation*}
%\begin{aligned}
%&m_{s'}
%\left(
%\overline{\bm w}_{s'},
%\boldf_1(\bR_{i1}),\ldots,\boldf_{s'}(\bR_{is'});
%\bm\delta_{\underline{s'+1}}
%\right)\\
%&:=
%\E\left[
%\int_{\mathcal{W}}
%m_{s'+1}
%\left(
%\overline{\bm w}_{s'+1},
%\tilde \boldf_1(\bR_{i1}),\ldots,
%\tilde \boldf_{s'+1}(\bR_{i,s'+1});
%\bm\delta_{(s'+2):s}
%\right)
%dQ_{s'+1}
%(w_{s'+1};\overline{\bm W}_{is'},\delta_{s'+1})
%\right.\\
%&\hspace{3in}
%\Bigr | \ 
%\overline{\bW}_{is'}=\overline{\bm w}_{s'},
%\boldf_1(\bR_{i1}),\ldots,
%\boldf_{s'}(\bR_{i,s'}),
%S_i=s \biggr ].
%\end{aligned}
%\end{equation*}    
\end{definition}
Intuitively, a deconfounder retains the information needed for the
conditional means $\mu$ and $m_{s'|s}$ in the sequential g-formula. The first condition,
Equation~\eqref{deconfounder_independence}, requires that, conditional
on the treatment and deconfounder histories, adding the full internal
representation history does not change the conditional mean of
$Y_{is}$. The second condition,
Equation~\eqref{pseudo_outcome_independence}, imposes the analogous
requirement at each earlier stage of the backward recursion.

To connect these conditions to the latent confounding features,
recall that both $\bU_{it}^{(s)}$ and
$\boldf_t^{(s)}(\bR_{it})$ are deterministic functions of
$\bR_{it}$. The conditional independence of $Y_{is}$ from
$\overline{\bR}_{is}$ given the treatment and confounding histories,
together with Equation~\eqref{deconfounder_independence}, implies that
\begin{align*}
\E\left[
Y_{is}
\mid
\overline{\bW}_{is},
\overline{\bU}_{is}^{(s)},
S_i\geq s
\right] &=
\E\left[
Y_{is}
\mid
\overline{\bW}_{is},
\boldf^*_1(\bR_{i1}),
\cdots,
\boldf^*_s(\bR_{is}),
S_i\geq s
\right]\\
&= \E\left[
Y_{is}
\mid
\overline{\bW}_{is},
\overline{\bR}_{is},\boldf^*_1(\bR_{i1}),
\cdots,
\boldf^*_s(\bR_{is}),
S_i\geq s
\right]\\
&= \E\left[
Y_{is}
\mid
\overline{\bW}_{is},
\overline{\bR}_{is},\boldf_1^{(s)}(\bR_{i1}),\ldots,
\boldf_s^{(s)}(\bR_{is}),
S_i\geq s
\right]\\
& =
\E\left[
Y_{is}
\mid
\overline{\bW}_{is},
\boldf_1^{(s)}(\bR_{i1}),\ldots,
\boldf_s^{(s)}(\bR_{is}),
S_i\geq s
\right],
\end{align*}
where the first and third equalities follow from the fact that the confounding features are deterministic functions of the internal representations, the second equality follows from Equation~\eqref{eq:observe_independence}, and the final equality follows from Equation~\eqref{deconfounder_independence}.
Therefore, conditioning on the deconfounder history yields the same
outcome regression as conditioning on the latent confounding
history. The second mean-independence condition similarly extends this
information-preservation requirement to the intermediate
conditional means in each step of backward recursion.
Importantly, although many functions may satisfy the same mean-independence conditions, any such function yields the same identification formula.

The following theorem establishes that these mean-independence
conditions, together with bounded relative overlap and the
maintained causal assumptions, allow us to identify the quantity of interest.

\begin{theorem}[Identification for Temporally Aggregated Average Potential Outcome]\label{theorem:identification}
  Under Assumptions \ref{consistency}--\ref{overlap}, for each
  $s \in \cS$, there exist deconfounder functions
  $\boldf_{s'}^{(s)}: \cR \mapsto \cF \subset \R^{D_f}$,
  $s' = 1, \ldots, s$ that satisfy bounded relative overlap.
By adjusting for such a deconfounder, we can nonparametrically
identify the average outcome under the stochastic intervention defined
in Equation~\eqref{target_quantity},
\begin{align*}
\bar\Psi(\bm\delta)
& \ = \  \E\left[ \sum_{s = 1}^{S_i} \int_{\cR^{s}} \int_{\mathcal{W}^{s}}
                \E\left[Y_{is} \mid
                \overline{\bW}_{is} = \overline{\bm{w}}_{s}, \
                \{\boldf^{(s)[s]}_{s''}(\bR_{is''})\}_{s''=1}^{s}, S_i \geq s\right]
                \right. \\
  & \left. \hspace{0.2in} \times \prod_{s' = 1}^{s} \left\{ d Q_{s'}(w_{s'}; \overline{\bm{w}}_{s'-1}, \delta_{s'}) dF(\bR_{is'} \mid \overline{\bm{W}}_{i,s'-1} = \overline{\bm{w}}_{s'-1}, \{\boldf^{(s)[s'-1]}_{s''}(\bR_{is''})\}_{s''=1}^{s'-1}, S_i \geq s) \right\} \right],
\end{align*}
where the superscript $[t]$ in $\boldf_{s'}^{(s)[t]}$ is introduced to
indicate that the deconfounder may differ across the levels of the
recursively defined outcome models, i.e., we can have
$\boldf_{s'}^{(s)[t]}(\bR_{is'}) \neq \boldf_{s'}^{(s)[r]}(\bR_{is'})$
with $t \neq r$.
\end{theorem}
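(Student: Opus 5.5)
The argument splits into three parts. We first show existence of a deconfounder. Next we show that, for each fixed $s$, the segment-specific g-formula in Equation~\eqref{eq:gformula_U} is unchanged when the latent confounders are replaced by a deconfounder. Finally we aggregate over $s$.

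\emph{Existence.} For existence we take the trivial choice $\boldf_{t}^{(s)}(\bR_{it}) := \bg_{\bU_t}^{(s)}\bigl(\bh_{\bgamma}(\bR_{it})\bigr) = \bU_{it}^{(s)}$. This map is well defined because, by Assumptions~\ref{confounding_feature} and~\ref{det_dec}, $\bX_{it}$ is a deterministic function of $\bR_{it}$. With this choice, condition~\eqref{deconfounder_independence} is exactly the first line of~\eqref{eq:observe_independence}, since conditioning additionally on $\overline{\bR}_{is}$ adds nothing. Condition~\eqref{pseudo_outcome_independence} follows from the second line of~\eqref{eq:observe_independence}: given the past $\overline{\bW}_{is'}$ and $\overline{\bU}^{(s)}_{is'}$, the next confounder $\bU^{(s)}_{i,s'+1}$, and hence any $m_{s'+1\mid s}$ evaluated at it, does not depend on $\overline{\bR}_{is'}$. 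Bounded relative overlap for this deconfounder is then Assumption~\ref{overlap} verbatim. Proposition~\ref{prop:seq_independent_support} makes this plausible, since it says both treatment levels are attainable at every confounder value.

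\emph{Identification of $\Psi_s$ by backward induction.} Fix $s$. Assumptions~\ref{consistency}, \ref{randomization}, \ref{treatment_feature}, and~\ref{separability} give sequential ignorability of $W_{it}$ given $(\overline{\bW}_{i,t-1},\overline{\bU}^{(s)}_{it})$. The reason is that $W_{it}$ and $\bU^{(s)}_{it}$ are functions of the randomized $\bX_{it}$, the potential outcomes depend on the video only through $(W,\bU)$, and the potential outcomes are independent of $\bX_i$. Assumption~\ref{overlap} then makes the intervention density $dQ_t$ absolutely continuous with respect to $\pi_t$, so the standard incremental-intervention argument of \citet{kennedy_nonparametric_2019} yields~\eqref{eq:gformula_U}.

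We next induct downward on $s'$, from $s$ to $0$. The inductive claim is that the $\bU$-based iterated regression at stage $s'$ equals $m_{s'\mid s}$ built from a deconfounder $\boldf^{(s)[s']}$, a deconfounder that may be stage-specific.
\begin{itemize}
\item The base case $s'=s$ is the displayed chain of equalities preceding the theorem.
\item For the inductive step, the stage-$s'$ quantity is obtained by integrating the stage-$(s'+1)$ function against $dQ_{s'+1}$ and against the conditional law of the next confounder given the past. We rewrite that conditional expectation as one over $\bR_{i,s'+1}$, using determinism of $\bU$ in $\bR$. We then apply~\eqref{pseudo_outcome_independence} twice, exactly mirroring the four-step chain: once to insert the full history $\overline{\bR}_{is'}$, and once to drop it in favor of the deconfounder history.
\end{itemize}
This inductive step is also why the superscript $[t]$ is needed. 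The deconfounder used in the conditioning set at level $s'$ may differ from the one used inside $m_{s'+1\mid s}$, and~\eqref{pseudo_outcome_independence} is stated precisely for such pairs $(\boldf,\tilde{\boldf})$.

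\emph{Aggregation.} We write $\bar\Psi(\bm\delta)=\E\bigl[\sum_{s}\mathbbm{1}\{S_i\ge s\}\cdot(\text{inner integral})\bigr]$. Because $S_i$ is determined by the randomized video, Assumption~\ref{randomization} lets us condition on $S_i\ge s$. This gives $\bar\Psi(\bm\delta)=\sum_s \P(S_i\ge s)\Psi_s(\bm\delta)$, and folding the probabilities back into the outer expectation yields the displayed formula.

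The main obstacle is the inductive step. It needs care in two places. First, $p_t$ in $dQ_t$ conditions only on the treatment history, so it is a fixed observed function and passes through all conditional expectations unchanged. Second, the conditioning on $S_i\ge s$ must be kept consistent at every stage, since the confounders $\bU^{(s)}$ depend on the target index $s$.
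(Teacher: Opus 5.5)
Your three-part route is the one the paper sketches: the $\bU$-based g-formula \eqref{eq:gformula_U}, then a backward induction that swaps $\overline{\bU}^{(s)}_{is}$ for the deconfounder using \eqref{deconfounder_independence} and \eqref{pseudo_outcome_independence}, then aggregation via $\bar\Psi(\bm\delta)=\sum_s\P(S_i\ge s)\Psi_s(\bm\delta)$. Your existence argument with $\boldf^{(s)}_t(\bR_{it})=\bU^{(s)}_{it}$ and your aggregation step are fine. The gap is in the base case and inductive step for a general deconfounder.

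The chain of equalities preceding the theorem, and your stage-wise analogue of it, only shows $\mu_s(\overline{\bW}_{is},\{\boldf^{(s)}_t(\bR_{it})\}_{t=1}^s)=\E[Y_{is}\mid\overline{\bW}_{is},\overline{\bU}^{(s)}_{is},S_i\ge s]$ at the \emph{observed} treatment history, i.e., almost surely on $\{\overline{\bW}_{is}=\overline{\bm w}_s\}$. But the pseudo-outcome \eqref{pseudo_outcome} and the displayed formula plug both $w\in\{0,1\}$ into $m_{t\mid s}(\overline{\bW}_{i,t-1},w,\cdot)$ for every draw of $\bR_{it}$, whatever that draw's own treatment. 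Because $W_{it}=g_W(\bX_{it})$ is a deterministic function of $\bR_{it}$, the two arms occupy disjoint sets of $\bR_{it}$, so this step is pure extrapolation. Mean independence is silent off the observed arm. Bounded relative overlap for $\boldf$ only guarantees that every deconfounder value seen in one arm also occurs in the other, not that the two regressions agree there.

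Here is a one-segment counterexample ($s=1$, subscripts suppressed). Let $\bU\in\{a,b\}$ be confounded with $W$, let $V\in\{a,b\}$ be another segment feature independent of $(W,\bU)$, and let $\E[Y\mid W=1,\bU]$ be constant. Set $\boldf=\bU$ on $\{W=0\}$ and $\boldf=V$ on $\{W=1\}$. Then \eqref{deconfounder_independence} and bounded relative overlap hold, and \eqref{pseudo_outcome_independence} is vacuous. Yet $\int\mu_1(0,\boldf(\bR))\,dF(\bR)=\sum_{v}\E[Y\mid W=0,\bU=v]\,\P(\boldf=v)$ differs from the $\bU$-based value $\E\{\E[Y\mid W=0,\bU]\}$ whenever $\E[Y\mid W=0,\bU]$ is non-constant and $\P(\bU=b\mid W=1)\neq\P(V=b)$.

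What your induction is missing is the requirement the paper places on the deconfounder in its estimation section: like $\bU^{(s)}_{it}$ under Assumption~\ref{separability}, it must not be a function of the same-segment treatment feature. Combined with the independent-support property of Proposition~\ref{prop:seq_independent_support} and overlap, this lets you match each $\bR_{it}$ in one arm with a point in the other arm that has the same $\bU^{(s)}_{it}$ and the same deconfounder value. That extends each stage-$t$ identity from $\{W_{it}=w\}$ to all of $\{\overline{\bW}_{i,t-1}=\overline{\bm w}_{t-1}\}$ before you integrate against $dQ_t$. Alternatively, if you only claim the result for the deconfounder you exhibit, $\boldf=\bU$, the displayed expression collapses directly to \eqref{eq:gformula_U} and no induction is needed.
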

The proof is omitted as it is similar to that of Theorem~1 of
\cite{nakamura2026genai}.

\subsection{Estimation and Inference}

Given this identification result, we now turn to estimation and
inference.  The central challenge is estimating the deconfounder
characterized by Assumption~\ref{separability}.  Namely, the
deconfounder cannot be a function of the treatment feature within the
same segment, together with the mean-independence conditions in
Equations~\eqref{deconfounder_independence}
and~\eqref{pseudo_outcome_independence}. 
%In cross-sectional settings, the deconfounder can be learned with a neural network architecture that encodes this mean-independence relationship \citep{imai2026causal}. With video, however, the difficulty is that we must learn the deconfounder $\boldf_{s'}^{(s)[t]}$ at the $t$-th outcome regression for every target outcome $s \in \cS$ and every internal representation $\bR_{is'}$, while satisfying Equation~\eqref{pseudo_outcome_independence}.

\begin{figure}[t]
\centering
\includegraphics[width=1.0\linewidth]{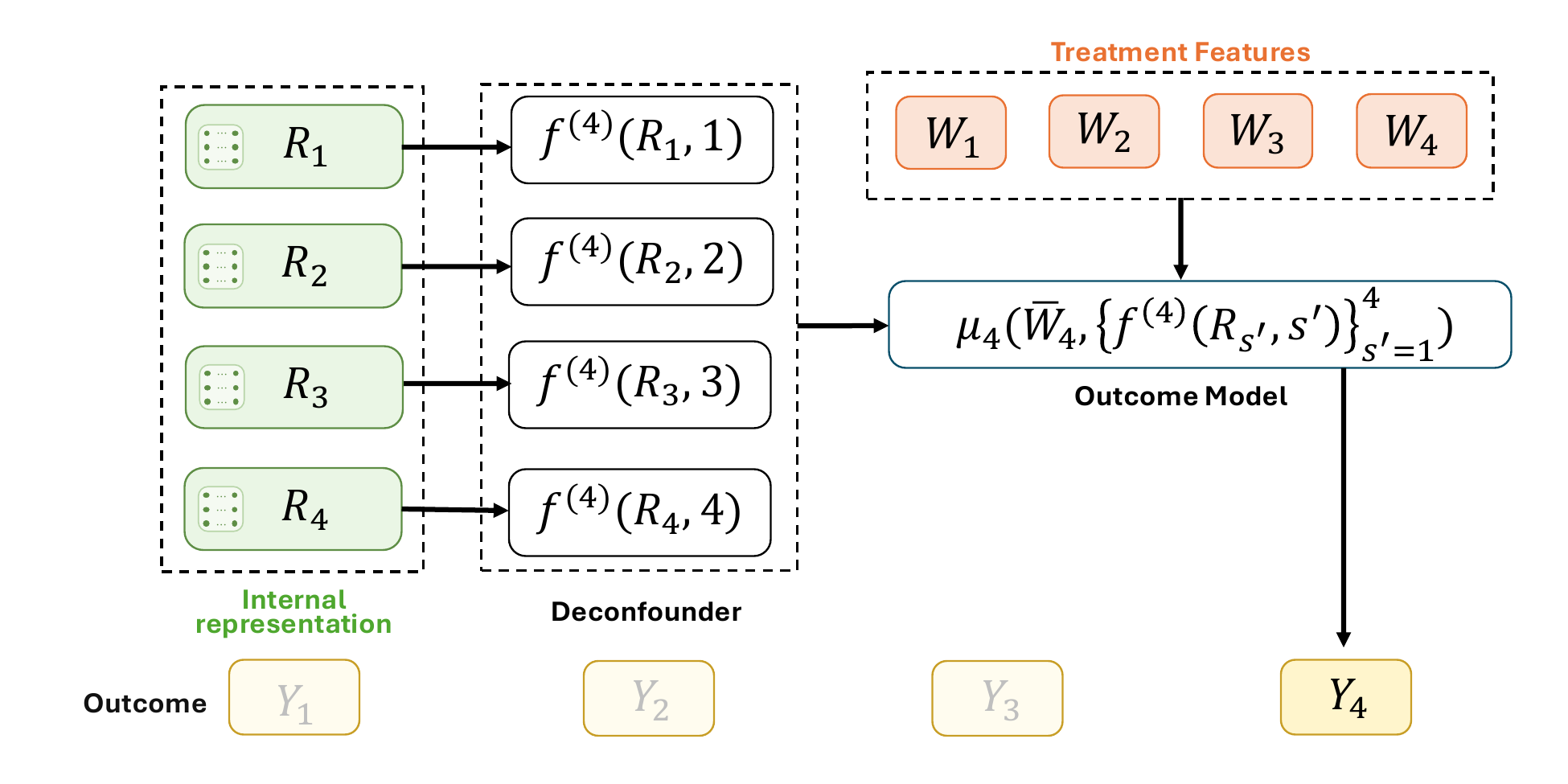}
\caption{Diagram Illustrating the Proposed Neural Network Architecture
  for the case with four segments. The proposed model takes an
  internal representation at each segment $\bR_{is'}$ as an input and
  finds the deconfounder $\boldf^{(s)}(\bR_{is'})$, which is a
  lower-dimensional representation of $\bR_{is'}$. The set of
  deconfounders at all segments and the treatment history
  $\overline{\bm{W}}_{is}$ are then used to predict the outcome
  $Y_{is}$.
}
\label{architecture}
\end{figure}

We estimate a single neural network, shared across earlier segments for a given target outcome, to estimate the deconfounder and outcome model at each step of the backward recursion. As an illustration, consider the estimation of $\Psi_s$, the average potential outcome at segment $s$. We begin the backward recursion with $\mu_{s}$ and estimate this outcome model using the neural network architecture illustrated in Figure~\ref{architecture}. In this architecture, the history of internal representations predicts the outcome only through the treatment and deconfounder histories, thereby encoding the mean-independence condition in Equation~\eqref{deconfounder_independence}. In addition, the deconfounder is not a function of the treatment history, which ensures that it does not contain post-treatment information.
Note that to prevent the parameter explosion, we reparametrize deconfounder as $\boldf^{(s)}: \cS \times \cR \mapsto \cQ$, which takes the internal
representation $\bR_{is}$ and the segment indicator as inputs. The use
of segment indicator allows the deconfounder to vary for each input.  
We fit this
model by minimizing the following squared loss,
\begin{align}
    \{\hat\blambda_s, \hat\btheta_s\} = \argmin_{\btheta_s, \blambda_s} \; \frac{1}{N_s} \ \sum_{i = 1}^N \ \mathbbm{1}\{S_i \geq s\} \left\{ Y_{is} - \mu_{s}(\overline{\bm{W}}_{i s}, \{\boldf^{(s)}(\bR_{is'},s';
  \blambda_s)\}_{s'=1}^{s}; \btheta_s) \right\}^2\label{eq:minimization}
\end{align}
where $N_s = \sum_{i = 1}^N \mathbbm{1}\{S_i \geq s\}$.
Here, we make the network parameters
explicit with $\bm\lambda_s$ collecting the parameters of the
deconfounder $\boldf^{(s)}$, and $\btheta_s$ denoting the parameters
of the nuisance function.

Once we estimate the terminal outcome model, we recursively estimate
$m_{s' \mid s}$ for $s' = s-1, \cdots, 0$ so that the deconfounder
also satisfies Equation~\eqref{pseudo_outcome_independence}.
Specifically, we first rewrite Equation~\eqref{eq:recursive_m} as
\begin{align*}
    m_{s'\mid s}(\overline{\bm{W}}_{is'}, \{\boldf^{[s]}(\bR_{is'},s'')\}_{s''=1}^{s'}, \bm\delta_{(s'+1):s})
  \ = \
    \E[\tilde{Y}_{is'} \mid \overline{\bm{W}}_{is'}, \{\boldf^{[s]}(\bR_{is''}, s'')\}_{s''=1}^{s'}, S_i \geq s].
\end{align*}
where $\tilde Y_{is'}$ is the pseudo-outcome defined as
\begin{equation}
    \begin{aligned}
    \tilde{Y}_{is'} &:=
    \frac{\delta_{s'+1} p_{s'+1}(\overline{\bm
      W}_{is'})m_{s'+1 \mid s }(\overline{\bm{W}}_{is'}, W_{i,s'+1} = 1, \ \{\boldf^{[s]}(\bR_{is''}, s'')\}_{s''=1}^{s'+1}; \bm\delta_{(s'+2):s}) }{\delta_{s'+1}  p_{s'+1}(\overline{\bm W}_{is'}) + 1 - p_{s'+1}(\overline{\bm
      W}_{is'})}\\
      &\quad \hspace{.5in} + \frac{\{1 -
      p_{s'+1}(\overline{\bm W}_{is'})\} m_{s'+1 \mid s }(\overline{\bm{W}}_{is'}, W_{i,s'+1} = 0, \ \{\boldf^{[s]}(\bR_{is''}, s'')\}_{s''=1}^{s'+1}; \bm\delta_{(s'+2):s})
      }{\delta_{s'+1}  p_{s'+1}(\overline{\bm W}_{is'}) + 1 - p_{s'+1}(\overline{\bm
      W}_{is'})}. \label{pseudo_outcome}
    \end{aligned}
\end{equation}
Intuitively, $\tilde Y_{is'}$ represents the expected outcome at the target segment $s$ when all future treatments from segment $s'+1$ onward are generated according to the stochastic intervention. Therefore, to encode the corresponding deconfounder requirement, we fit the same neural network architecture with the pseudo-outcome replacing the observed outcome. That is, we minimize
\begin{align*}
    \{\hat\blambda_{s'}, \hat\btheta_{s'}\} = \argmin_{\btheta_{s'}, \blambda_{s'}} \; \frac{1}{N_s} \ \sum_{i = 1}^N \ \mathbbm{1}\{S_i \geq s\} \left\{ \tilde Y_{is'} - m_{s'\mid s}(\overline{\bm{W}}_{is''}, \{\boldf^{(s)}(\bR_{is''},s'';
  \blambda_{s'})\}_{s''=1}^{s'}; \btheta_{s'}) \right\}^2.
\end{align*}

Given the above neural network architecture, we estimate the average
potential outcome using the semiparametric estimation method for the
longitudinal data. We first derive the influence function for the
following single outcome estimand at segment $s$:
\begin{align*}
\Psi_{s}(\bm{\delta}):= \E\biggl[ \int_{\mathcal{W}^{s}} Y_{is}(W_{i1} = w_1, \ldots, W_{is} = w_s)
\prod_{s'= 1}^{s} d Q_{s'}(w_{s'}; \overline{\bm{w}}_{s'-1}, \delta_{s'}) \mid S_i \geq s \biggr].
\end{align*}
We derive the influence function, denoted as
$\psi_s(\cD_{is}; \bm\delta, \bm\eta_s, \Psi_s)$ with observed data
$\cD_{is} = \{Y_{is}, \overline{\bW}_{is}, \overline{\bR}_{is}\}$ and
nuisance functions $\bm\eta_s$ in the case of single-outcome
estimand.  The influence function satisfies the asymptotic linear expansion,
\begin{align}
    \sqrt{N_s} \bigl( \widehat\Psi_s(\bm\delta) - \Psi_s(\bm\delta)\bigr) = \frac{1}{\sqrt{N_s}}\sum_{i = 1}^{N}\mathbf{1}\{S_i \geq s\} \psi_s(\cD_{is}; \bm\delta, \bm\eta_s, \Psi_s) + o_p(1) \label{asymp_linear}
\end{align}
where $N_s = \sum_{i = 1}^N \mathbf{1}\{S_i \geq s\}$, and its expectation is zero:
$\E[\psi_s(\cD_{is}; \bm\delta, \bm\eta_s, \Psi_s)] = 0$.

The asymptotic linear expansion allows us to derive the asymptotic
normality by applying the central limit theorem to
Equation~\eqref{asymp_linear}, whereas the mean-zero property enables
us to construct the estimating equation using the influence
function. Furthermore, the estimator constructed from the influence
function is robust to the regularization bias in the nuisance
function, the property known as \emph{Neyman orthogonality}, which
allows us to use flexible machine learning functions for estimation
(see \citealt{tsiatis_semiparametric_2006,
  chernozhukov_doubledebiased_2018, hines_demystifying_2022,
  kennedy_semiparametric_2023}). This feature is especially important
in our setting because video data are high-dimensional and
unstructured, making the standard parametric modeling assumptions
difficult to justify and necessitating the use of flexible machine
learning models like neural networks. Once we derive the influence function for the single outcome, we can then obtain the influence function for the temporally aggregated average potential outcome. See Appendix~\ref{section:theorem_if} for the exact expression of influence function for each case.

We use the cross-fitting procedure based on the derived influence function for the estimation \citep{laan_targeted_2006, chernozhukov_doubledebiased_2018}. In our setting, cross-fitting is important because the nuisance functions $\bm\eta$ are estimated using flexible machine-learning methods, including neural networks, which can accommodate the high-dimensional video representations but may also overfit the data. As a result, we avoid using the same observations both to train the nuisance functions and to evaluate the influence-function-based estimating equation. At the same time, simple sample splitting is inefficient because it uses only part of the sample for inference. Cross-fitting resolves this problem by rotating the held-out sample across folds, so that each observation is used for statistical inference exactly once while all observations contribute to nuisance-function estimation. See Appendix~\ref{section:crossfit_algorithm} for the entire estimation procedure.

\subsection{Estimating Average Potential Outcome Trajectory}

We finally consider the average potential outcome trajectory, which is defined as
\begin{align}
    \Vec{\bm\Psi}(\bm\delta) = (\Psi_1(\bm\delta), \cdots, \Psi_{s_{\max}}(\bm\delta)), \label{target_quantity2}
\end{align}
for $s = 1, \cdots, s_{\max}$. Unlike the average potential outcome
aggregated over segments considered in the previous section, this
quantity preserves the dynamic pattern of the causal effect by jointly
reporting the expected potential outcome for all segments. While the identification of this quantity 
is nearly identical to that in Theorem~\ref{theorem:identification} and we can use an analogous influence function
for each outcome, inference for this quantity
requires simultaneous uncertainty quantification over the entire
trajectory, rather than pointwise confidence intervals at each
interval.

We thus construct uniform confidence bands by applying a multiplier
bootstrap procedure to the vector of estimated influence functions
\citep{chernozhukov_gaussian_2013, kennedy_nonparametric_2019}. To
construct a $(1 - \alpha)$ uniform confidence band, we need to find a
critical value $\widehat c_{1-\alpha}(\bm\delta)$ that satisfies
\begin{align*}
    \P\left(
    \Psi_s(\bm\delta)
    \in
    \left[
    \widehat\Psi_s(\bm\delta)
    \pm
    \widehat c_{1-\alpha}(\bm\delta)
    \frac{\widehat\sigma_s(\bm\delta)}{\sqrt{N_s}}
    \right]
    \ \text{for all }\
    s=1,\ldots,s_{\max}
    \right)
    \to
    1-\alpha.
\end{align*}
We obtain such a critical value through the following multiplier bootstrap procedure:
\begin{enumerate}
\item Using the cross-fitting procedure as in the previous subsection, compute the estimated influence function and the variance estimate $$\widehat\sigma^2_s(\bm\delta) = 
    \frac{1}{N_s} 
    \sum_{k = 1}^K \sum_{I(i) = k}
    \mathbbm{1}\{S_i \geq s\}\psi_s
    \left(
    \cD_{is};
    \bm\delta,
    \hat{\bm\eta}^{(-k)},
    \widehat{\Psi}_s
    \right)^2$$
for each interval $s = 1, \ldots, s_{\max}$ and each observation $i = 1, \ldots, N$.

\item For each bootstrap iteration $b = 1, \ldots, B$, draw a set of independent Rademacher multipliers $\{\xi_i^{(b)}\}_{i=1}^N$ that are independent of the data and $\Pr(\xi_i = 1) = \Pr(\xi_i = -1) = 0.5$, and form the bootstrapped process
\begin{align*}
    \widehat{G}_s^{(b)}(\bm\delta) = \frac{1}{\sqrt{N_s}} \sum_{k = 1}^K \sum_{I(i) = k} \xi_i^{(b)} \frac{\mathbbm{1}\{S_i \geq s\} \ \psi_s
    \left(
    \cD_{is};
    \bm\delta,
    \hat{\bm\eta}^{(-k)},
    \widehat{\Psi}_s
    \right)}{\widehat\sigma_s(\bm\delta)}, \quad s = 1, \ldots, s_{\max}.
\end{align*}

\item Compute the maximum of the absolute bootstrapped process across all intervals,
\begin{align*}
    T^{(b)}(\bm\delta) = \max_{1 \le s \le s_{\max}} \bigl| \widehat{G}_s^{(b)}(\bm\delta) \bigr|.
\end{align*}

\item Set the critical value $\widehat c_{1-\alpha}(\bm\delta)$ to the empirical $(1 - \alpha)$ quantile of $\{T^{(b)}(\bm\delta)\}_{b=1}^B$.
\end{enumerate}
The resulting band attains the nominal level asymptotically under the same regularity condition for the asymptotic normality \citep{chernozhukov_gaussian_2013, kennedy_nonparametric_2019}. Importantly, this procedure does not require refitting nuisance functions for each bootstrap procedure; we only need to sample the Rademacher multipliers in each bootstrap sample.

\section{Video Causal Benchmark Dataset and Empirical Validation}\label{sec:mario}

In this section, we construct a novel video causal benchmark dataset with known ground truth and use it to empirically validate the proposed methodology. The main challenge is to create a video dataset, in which treatment and confounding features have complex and dynamic relationships with the outcome while the true causal effect remains known. We overcome this challenge with the help of Super Mario Bros.\texttrademark{}

\subsection{Mario Benchmark Environment}

\begin{figure}[t]
    \centering
    \includegraphics[width=0.5\linewidth]{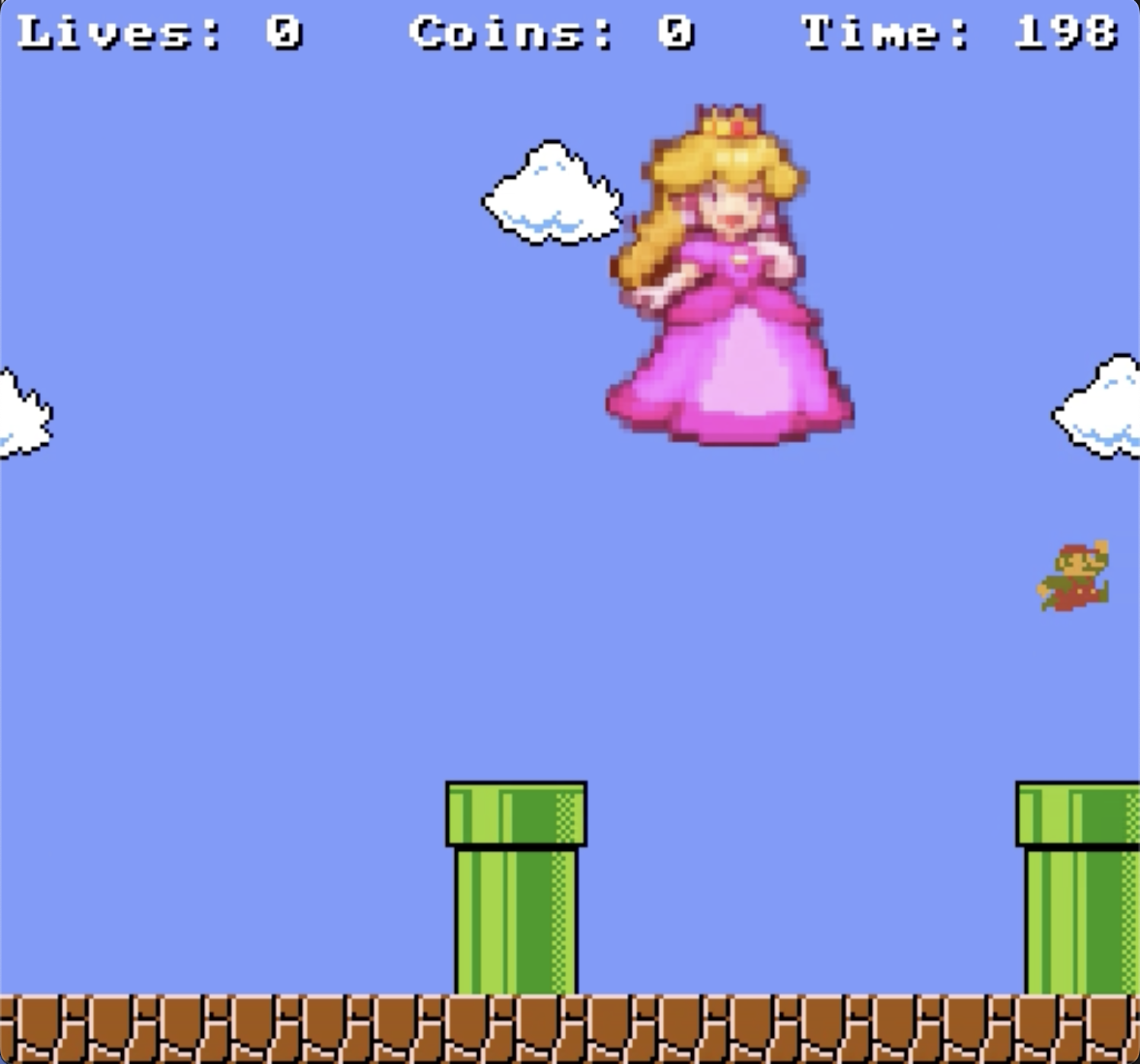}
    \caption{A Screenshot from the Mario environment. The treatment is the appearance of Princess Peach, while the outcome is the agent's number of jumps. Princess Peach has no effect on the Mario agent, but her appearance is correlated with the number of pipes, which induce jumping.}
    \label{mario_screenshot}
\end{figure}

We construct a simple Super Mario Bros.\texttrademark{} environment, in which the generation of time-varying treatment and confounding features is precisely controlled, while the outcome-generating process remains complex.  The environment is designed to mirror the structure of our campaign advertisement application, in which treatment features (e.g., a candidate's appearance) appear at different points in a video and may influence viewers' reactions. In our synthetic environment, the treatment is the appearance of Princess Peach floating at the top of the scene (Figure~\ref{mario_screenshot}) at various times throughout the video. Pipe obstacles serve as confounding features, with their appearances intentionally correlated with those of Princess Peach. The outcome is the \emph{reaction} of a fixed Glenn-Hartmann Mario AI agent \citep{karakovskiy2012mario}, measured by the number of jumps performed by Mario. Although the appearance of Princess Peach has no effect on the agent's behavior, pipe obstacles cause Mario to jump, creating a complex confounding structure while ensuring that the true causal effect of the treatment is zero.

In contrast to many interactive environments, the duration of a campaign advertisement is not affected by viewers’ reactions. To reflect this property, our Mario levels auto-scroll at a fixed rate, so that every video lasts exactly 26 seconds, corresponding to the average time required for a Glenn–Hartmann agent to complete a level. To generate each video, we partition the level $i$ into 2-second segments $s$ and first sample the pipe obstacles $U_{is}$ (confounders). 
%
% This segmentation mirrors the campaign-ad setting, where treatment features and confounding visual context both vary as the video unfolds, so adjustment strategies that ignore this dynamic structure can remain biased, as we show in Section~\ref{ssec:marioresults}.
%
In the initial video segment, we assign either 0 or 1 pipe with equal probability. To capture the fact that confounders evolve over the course of a video, the number of pipes in each subsequent segment follows a random walk: relative to the previous segment, the count decreases by one with probability $0.4$, remains unchanged with probability $0.4$, and increases by one with probability $0.2$. Any draw that would produce a negative number of pipes or exceed the maximum of five pipes that can fit onscreen in a segment is truncated to $0$ and $5$, respectively.

The appearance of Princess Peach in video segment $s$ of video $i$, denoted by $W_{is}$, serves as the treatment. Treatment assignment is correlated with the realized pipe count, so that segments containing more pipes are more likely to include Princess Peach. Specifically, the treatment assignment mechanism is given by
$$
W_{is} \sim \mathrm{Bernoulli}\left(\epsilon + (1-2\epsilon)\cdot \sigma(\alpha+\beta U_{is}) \right),
$$
where $\sigma$ is the sigmoid function, $\alpha$ controls the baseline log-odds of treatment assignment, and $\beta$ determines the strength of confounding. The parameter $\epsilon>0$ ensures that the treatment assignment probability is bounded away from 0 and 1, thereby satisfying the overlap (positivity) assumption in causal inference. Throughout our experiments, we set $\alpha=-1.5$, $\beta=1$, and $\epsilon=0.05$. This setup yields a strongly confounded treatment assignment mechanism; for example, the probability of Peach appearing in a segment is: $\Pr(W_{is}=1\mid U_{is}=0)=0.214$, $\Pr(W_{is}=1\mid U_{is}=1)=0.390$, $\Pr(W_{is}=1\mid U_{is}=2)=0.610$,
and $\Pr(W_{is}=1\mid U_{is}=3)=0.786$.

We repeat this procedure to generate $10{,}000$ levels. The Glenn-Hartmann Mario agent then plays each level, and we record the segment-level outcome $Y_{is}$ as the number of jumps made in video segment $s$. The resulting benchmark dataset captures a setting in which treatment features may appear multiple times throughout a video, confounding features evolve over time, and the ground-truth causal effect is known.

\subsection{Analysis Setup}\label{sec:mario_analysis}

We analyze the synthetic Mario data described above using the proposed methodology. Specifically, we first divide each video into one-second segments.
We choose 1 second, which is finer than level generation, because the rendered gameplay unfolds continuously and Mario can respond by jumping within the same two-second window.
We estimate the effect of Princess Peach’s appearance in segment $s-1$ on Mario’s counterfactual number of jumps in segment $s$. This one-segment lag ensures that the visual treatment precedes the behavioral reaction, since jumps are themselves part of the rendered video and may occur before or after Princess Peach appears within the same segment. %{\bf KI: I don't understand this sentence.  Please clarify $\rightarrow$ Done.}.  
For all methods, we restrict the confounding history to the previous five one-second segments, which is reasonable in this environment because Mario's jumps are driven primarily by recent screen content. This history requirement determines the analysis window: we exclude the first five outcome segments because they do not have a complete five-segment lagged history. %{\bf KI: Are you talking about our Markov assumption? This sentence is unclear. If that's the case, then we should bring out the last sentence of this paragraph earlier. $\rightarrow$  Done. Hope it clarifies. }. 
The last four seconds are excluded because pipes no longer appear near the end of the level due to the endgame flag and Mario rarely jumps. 

To implement the dynamic GPI methodology, we first regenerate the video content using the NVIDIA Cosmos tokenizer \citep{agarwal2025cosmos} and extract their internal representations. These representations contain the information used by those models to reproduce the generated videos. Because their dimensionality is substantially larger than the sample size, we follow \cite{imai2026causal} and apply a pooling strategy to obtain lower-dimensional representations.
Specifically, we average the internal representations over the temporal dimension within each segment. This pooling strategy is reasonable provided that the segments are sufficiently short and exhibit limited visual variation over time. After pooling, the dimensionality of the resulting representations is $(16, 40, 60)$ for video, where 16 denotes the number of channels, 40 the height, and 60 the width. For the video representations, we preserve their tensor structure rather than flattening them, using the original channel-by-height-by-width array as input to the convolutional encoder described below.

We then apply the proposed dynamic GPI methodology to the regenerated video frames, using 2-fold cross-fitting. In each fold, we first estimate the neural networks shown in Figure \ref{architecture}. To encode the video tensor, we use a three-dimensional convolutional encoder with channel sizes $(8,16,32)$ followed by a 256-dimensional output layer, which allows us to capture the structure of the unstructured input. We then estimate the shared representation for the deconfounder and the outcome head using feedforward networks with hidden-layer dimensions $(256,32)$ and $(64)$, respectively. All networks are trained with Adam optimizer with weight decay using a learning rate of $3.912\times 10^{-4}$, a batch size of 32, no dropout, and early stopping with a patience of 5. These hyperparameters were selected using Optuna with 100 trials over the segments included in the analysis and then fixed across all cross-fitting folds \citep{akiba_optuna_2019}. After estimating the deconfounder, we recursively fit the outcome regression and compute the expected counterfactual outcome under two incremental parameters, $\delta=0.5$ and $\delta=5.0$. We report their difference, $\widehat\Psi_s(5.0)-\widehat\Psi_s(0.5)$, which compares Mario's counterfactual number of jumps under a higher versus lower probability of Princess Peach appearing at time $s-1$. Standard errors are computed using the influence function in Equation~\eqref{if_formula}.

We compare our proposed methodology against two benchmark estimators. The first is an unadjusted marginal structural model (MSM) that uses the same dynamic stochastic intervention but does not control for any confounding features. Because the appearance of Princess Peach is correlated with the appearance of pipes, this estimator is expected to exhibit substantial bias. The second is a pipe-count-adjusted MSM that adjusts for the average number of pipes appearing within each segment. Although pipe appearances are the only source of confounding by construction, this adjustment relies on a coarse segment-level summary and therefore cannot account for more fine-grained confounding information, such as the spatial location of pipes or their timing within the segment. For both benchmarks, we use the same stochastic interventions and report the difference $\widehat\Psi_s(5.0)-\widehat\Psi_s(0.5)$. As all three methods target the same dynamic stochastic intervention, the only difference across methods is how they adjust for confounding.

\subsection{Results}\label{ssec:marioresults}

Overall, we find that methods that ignore complex video dynamics, such as the no-adjustment benchmark, or that adjust only for a coarse hand-coded proxy such as average pipe counts, would lead researchers to falsely conclude that Princess Peach’s appearance increases agent reactions (jumping), whereas Dynamic GPI recovers the known null effect by adjusting for richer, temporally evolving visual confounding.

\begin{figure}[t]
    \centering
    \includegraphics[width=1.0\linewidth]{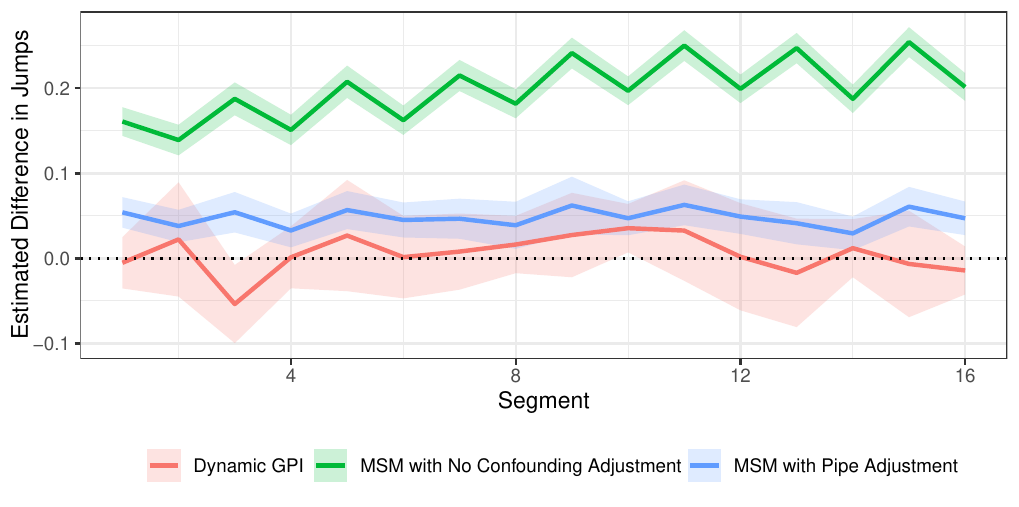}
    \caption{Estimated difference in the average number of jumps under two stochastic interventions: $\delta=5.0$, which increases the probability that Princess Peach appears, and $\delta=0.5$, which decreases this probability. Estimates are shown for each analyzed video segment. Segment 1 corresponds to seconds 5--6 of the original video. The figure compares the proposed dynamic GPI estimator, a marginal structural model (MSM) without confounding adjustment, and an MSM that adjusts for the average number of pipes within each segment. By construction, Princess Peach has no causal effect on Mario’s jumps, so the true effect is zero, shown by the black dotted line. Shaded regions denote 95\% uniform confidence bands computed using a multiplier bootstrap with 2,000 replications.}
    \label{mario_results}
\end{figure}

Specifically, Figure~\ref{mario_results} reports the estimated difference in the average number of jumps for each of the three estimators. 
The results show that the no-adjustment benchmark (green) is positive for all segments, and its confidence intervals do not include the ground truth of zero. This bias is due to confounding: the number of pipes is positively correlated with the appearance of Princess Peach, so the positive sign is expected. 

Adjusting for the average number of pipes within each segment (blue) removes most of the confounding bias, but the estimates remain slightly above the ground truth, and the confidence intervals do not include zero in any segment. This is because the pipe count discards important information about the confounders, such as the location and timing of the pipes. 

In contrast, the proposed Dynamic GPI methodology (red) generally includes the ground truth within its confidence intervals, and its point estimates are closest to zero. This shows that directly adjusting for video features enables researchers to account for richer aspects of confounding.

These results also suggest that common practices for analyzing unstructured data can be misleading. When analyzing texts and images, researchers often create a small number of hand-crafted features and adjust only for those low-dimensional summaries. However, there is no guarantee that such constructed variables alone capture all relevant aspects of confounding. This problem is especially severe for video data because videos are dynamic but the low-dimensional summaries often ignore temporal information. In contrast, the proposed Dynamic GPI methodology does not require researchers to prespecify which low-dimensional summaries are sufficient. Instead, it learns low-dimensional confounding features directly from the data.

\section{Empirical Analysis}\label{sec:application}

We apply the proposed methodology to the presidential campaign
advertisement data described in
Section~\ref{sec::example}. Specifically, we first generate
time-stamped transcripts using OpenAI’s \texttt{Whisper} \citep{radford2023robust}.
We then divide each video into consecutive 5-second segments. If a video's duration is not an exact multiple of 5 seconds, the final segment contains the remaining footage. For each video, we exclude the first and last segments from the analysis because treatment coding is less reliable at the boundaries of the video. For each segment, we regenerate the video content and transcript using the NVIDIA Cosmos tokenizer \citep{agarwal2025cosmos} and the instruction-fine-tuned LLaMA 3.1 model with 8 billion parameters \citep{grattafiori2024llama2}, respectively. Figure~\ref{reconstruction} presents an example of the original (left) and regenerated (right) video frames. The regenerated frames closely preserve the visual content of the original videos, indicating high-fidelity reconstruction.

\begin{figure}[t]
    \centering
    \includegraphics[width=1.0\linewidth]{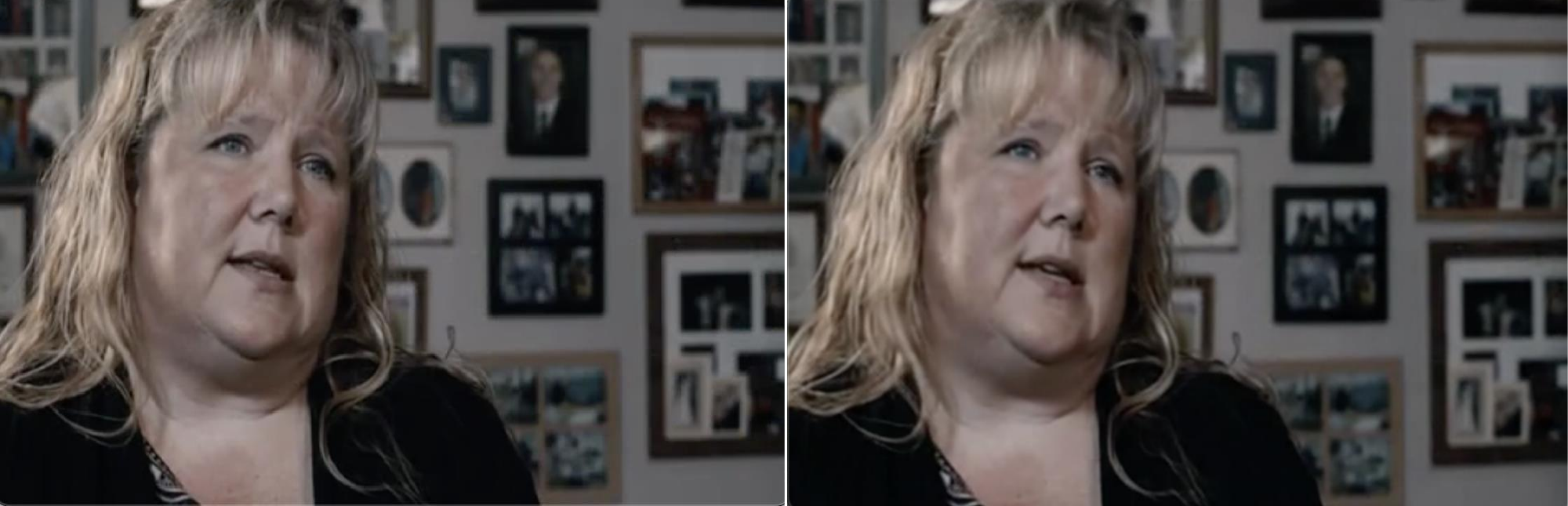}
    \caption{The original video frame (left) and the reconstructed video frame using the NVIDIA Cosmos tokenizer (right).  The reconstructed video frame preserves most visual features of the original frame.}
    \label{reconstruction}
\end{figure}

After regenerating the videos and transcripts, we extract their internal representations from the corresponding generative AI models. For the video modality, we follow the same representation-extraction and pooling procedure used in the Mario benchmark analysis described in Section~\ref{sec:mario_analysis}. For the transcripts, we use the hidden state corresponding to the final token of each regenerated text sequence, which contains the entire semantic information due to the autoregressive structure of the decoder-only model (e.g., LLaMA) and thus is a common choice in the literature (e.g., \citealt{neelakantan2022text, ma2024fine}). After pooling, the dimensionality of text representation is 4096. 

We then apply the proposed dynamic GPI methodology to the regenerated transcripts and video frames following the estimation procedure and training protocol used in the Mario benchmark analysis in Section~\ref{sec:mario}. We additionally adjust for respondent gender and partisanship to account for the stratified sampling design, and for the partisanship of the assigned video's candidate for the subsequent heterogeneity analysis. As before, we use 2-fold cross-fitting. In each fold, we first estimate the neural networks shown in Figure~\ref{architecture}. For text, we use a feedforward encoder with a 512-dimensional hidden layer and a 128-dimensional output representation. For video, we use a three-dimensional convolutional encoder with channel sizes $(16,32,64)$ followed by a 128-dimensional output layer. We then estimate the shared representation and the outcome head using feedforward networks with hidden-layer dimensions $(256,128)$ and $(64)$, respectively. As before, all networks are trained with Adam optimizer with weight decay using a learning rate of $7.304\times 10^{-5}$, a batch size of 4, dropout of 0.1, and early stopping with a patience of 5. As before, these parameters are chosen based on the same hyperparameter tuning procedure using Optuna with 100 trials. 

After estimating the deconfounder, we recursively fit the outcome regression and then compute the expected value of the counterfactual outcome under two incremental parameters, $\delta = 0.5$ and $\delta = 5$, along with its standard error using the influence function in Equation~\eqref{if_formula}.
We restrict the intervention analysis to segments through 25--30 seconds because the number of advertisements becomes small after 30 seconds, making estimates for later segments less stable. Finally, we use the estimated influence function to estimate the effect and its standard error separately for each candidate-party and respondent-party pair \citep{semenova2021debiased}.

\begin{figure}[t]
    \centering
    \includegraphics[width=1.0\linewidth]{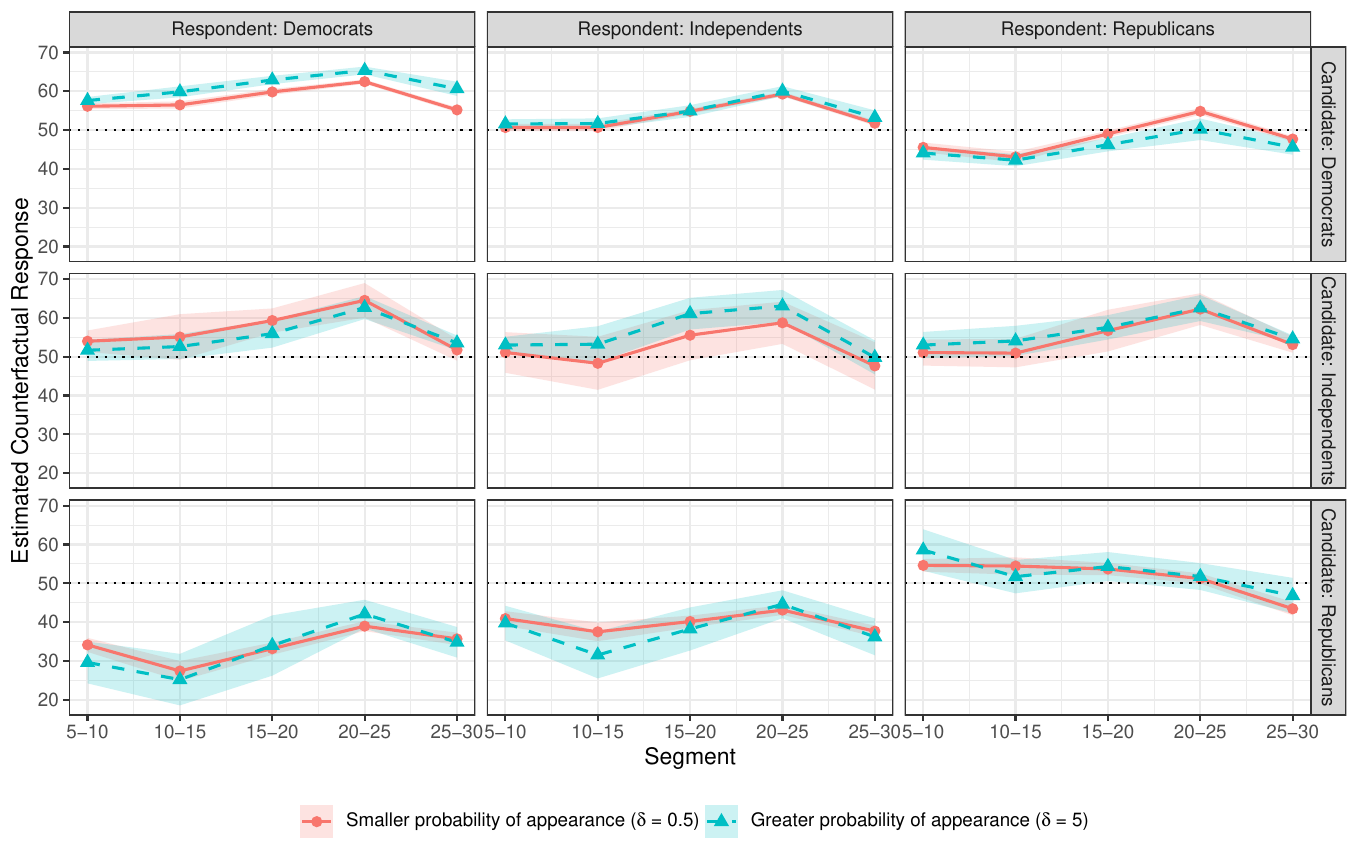}
    \caption{Estimated average potential outcome trajectory of
      candidate dial ratings for each respondent (columns) and candidate (rows) partisanship pair at each segment under two stochastic
      interventions: $\delta = 5$, which increases the probability
      that the candidate appears, and $\delta = 0.5$, which decreases the
      probability. The shaded region represents
      the 95\% uniform confidence bands computed using a multiplier
      bootstrap. The dotted line (50) is the initial value and
      represents neither cold nor warm.}
    \label{results_videos}
\end{figure}

Figure~\ref{results_videos} reports the estimated average potential outcome trajectories for each candidate (row) and respondent (column) partisanship pair. The results show that increasing the probability of Democratic candidate appearances improves average ratings among Democratic respondents (top left corner). This improvement occurs across segments and appears to be cumulative, as later segments show larger differences between the counterfactual trajectory under a higher probability of candidate appearance (blue) and the trajectory under a lower probability (red). 

This positive effect of partisan alignment is not observed among Republican and Independent respondents (middle and right columns). By contrast, increasing the probability of Democratic candidate appearances decreases average ratings among Republican respondents (top right corner), yielding a backlash effect. We do not observe an analogous backlash among Democratic respondents toward Republican candidates (bottom left corner), largely because Democratic respondents rate Republican candidate videos low, regardless of how frequently the candidates appear.

\section{Concluding Remarks}\label{sec:conclusion}

In this paper, we develop the first statistical methodology for causal inference with video features as treatments. Applying the GPI framework of \cite{imai2026causal, imai2026leveraging}, we first use a generative model to reproduce videos of interest and leverage its internal representations as learned, low-dimensional summaries for downstream causal inference. We establish the nonparametric identification and estimation of causal effects for treatment features that unfold over time while adjusting for latent confounding features in the video. To do so, we develop a longitudinal neural network architecture that learns a low-dimensional representation, which we call the deconfounder. Adjusting for the learned deconfounder yields a consistent and asymptotically normal estimator of the causal effect.

To validate our methodology, we construct a Super Mario Bros.\texttrademark{} video causal benchmark that is of independent interest for evaluating causal inference methods. The benchmark enables precise control over the generation of treatment and confounding features while using a Mario AI agent to generate complex real-time outcomes. We show that the proposed methodology accurately recovers the ground-truth causal effects.  
Finally, we apply the proposed methodology to television advertisements from the 2020 U.S. presidential campaign. We find that increasing the probability of a candidate appearing over time leads to more favorable average evaluations of the advertisement.

Future research can extend the proposed methodology in several directions. First, our current implementation relies on visual content and transcripts rather than a fully audiovisual representation of videos. In our application, we incorporate spoken content through transcripts instead of modeling the raw acoustic signal directly. This choice reflects a practical limitation of existing open-source generative video models: although they reproduce visual content with high fidelity, they cannot yet faithfully regenerate audio. As generative audio and audiovisual models continue to mature, we believe our framework can be extended to extract internal representations directly from the acoustic signal, allowing the deconfounder to capture confounding features that transcripts alone cannot recover. 

Second, the current framework requires researchers to segment videos into discrete components before estimation. Although these segments can in principle be made arbitrarily short, finer segmentation substantially increases the number of time intervals and can make estimation computationally expensive or infeasible. Future work should therefore develop more efficient and stable learning algorithms that remain scalable when the number of video segments is large.

\newpage 
\bibliography{my, references}

\newpage

\appendix
\setcounter{equation}{0}
\setcounter{figure}{0}
\setcounter{table}{0}
\setcounter{section}{0}
\renewcommand {\theequation} {S\arabic{equation}}
\renewcommand {\thefigure} {S\arabic{figure}}
\renewcommand {\thetable} {S\arabic{table}}
\renewcommand {\thesection} {S\arabic{section}}

\begin{center}
  {\LARGE \bf Appendix} 
\end{center}

\section{Details of Data and Measurement}
\label{app:surveys}

%\subsection{Measuring Treatment and Outcome Variables}

\begin{figure}[h]
  \centering
  \begin{subfigure}[t]{0.66\textwidth}
    \centering
    \includegraphics[width=0.4575\linewidth]{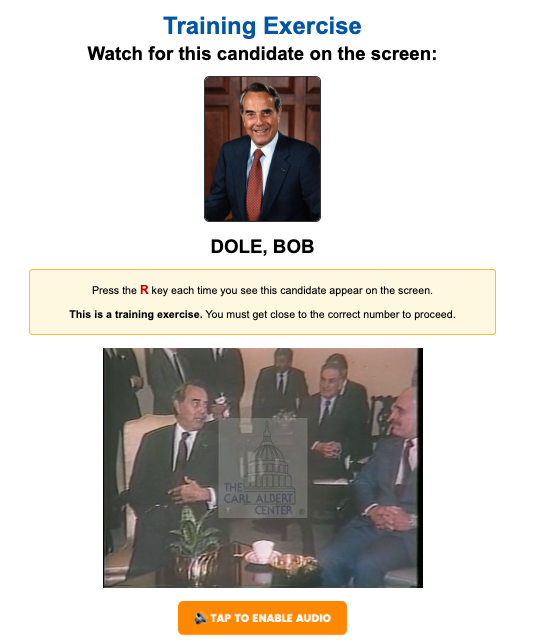} 
    \includegraphics[width=0.4425\linewidth]{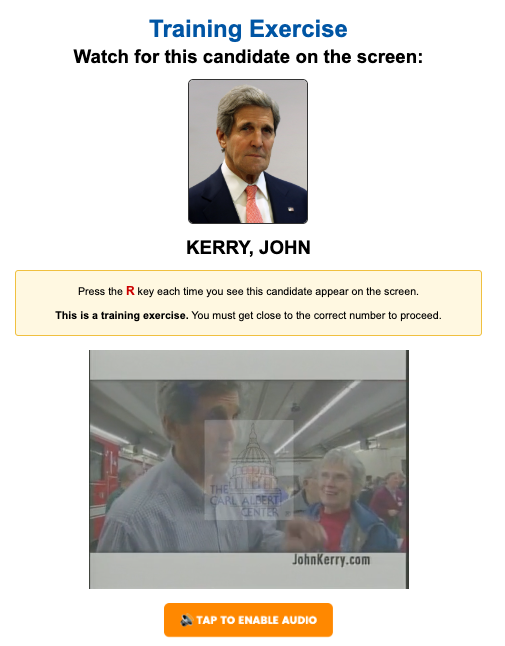}
    \caption{Training tasks for candidate identification}
    \label{fig:prolific1a}
  \end{subfigure}\hfill
   \begin{subfigure}[t]{0.33\textwidth}
    \centering
    \includegraphics[width=\linewidth]{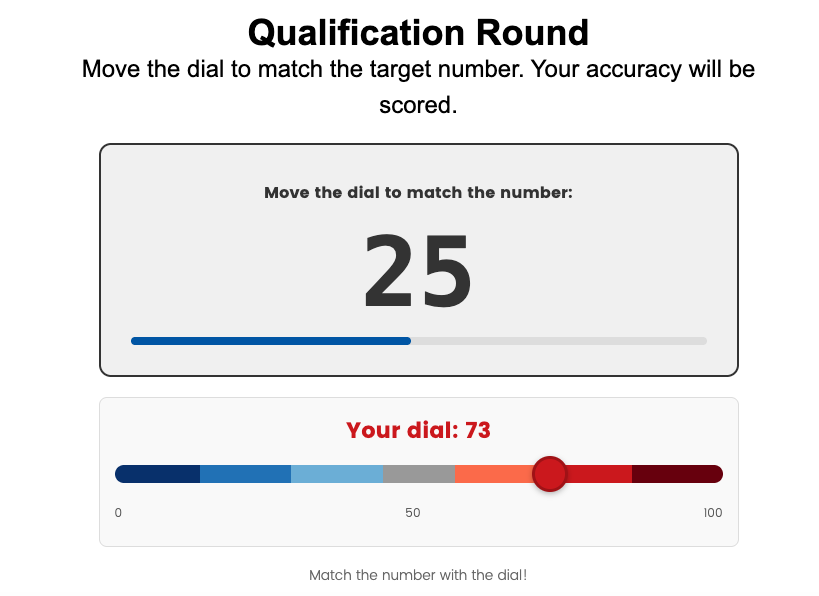}
    \caption{Training task for candidate rating}
    \label{fig:prolific2a}
      \end{subfigure}\hfill 
  \caption{Screening of Respondents for the Candidate Identification and Rating Tasks. Figure~\ref{fig:prolific1a} shows the screenshots of two historical training advertisements used to screen respondents for the candidate identification task.   Figure~\ref{fig:prolific2a} shows the training task used to train respondents for the candidate rating task.}
  \label{fig:prolific1}
\end{figure}

We screened respondents prior to the candidate identification task
using questions based on the two historical advertisements shown in
Figure~\ref{fig:prolific1a}. Specifically, respondents were randomly
assigned to one of the two advertisements and given up to three
attempts to identify the correct number of candidate appearances,
using researcher-established ground truth. Respondents who failed to
answer correctly within three attempts were exited from the survey. An
attempt was scored as correct if the number of appearances a
respondent identified was within one of the ground-truth count, which
was 5 for the Dole advertisement and 7 for the Kerry
advertisement. The timing of the keypresses themselves was not
evaluated. In total, 432 respondents began the survey, of whom 8 did
not successfully complete the screening exercise, resulting in a final
sample of 424 participants.

Using Qualtrics’ ``Evenly Present Elements'' feature, we sought to
assign 10 advertisements to each respondent while balancing the number
of evaluations across advertisements. In practice,
    respondents coded an average of 9.97 advertisements each with the
    standard deviation of 0.26. Each advertisement was coded by an
    average of 4.98 unique respondents (SD = 0.55). 
    For each
    advertisement, we computed the share of coder pairs who agreed on
    whether the candidate appeared in each five-second interval,
    averaged across all of the advertisement's intervals. Across the
    849 advertisements, the mean pairwise agreement rate was 0.91 (SD
    = 0.07). % and Krippendorff's $\alpha$ was 0.79, both of which
            % comport with conventional standards of reliability
            % \citep{krippendorff2004reliability}.
    We found 42 ads in which the candidate never appeared in any
    segment. We manually reviewed all 42: in 33, the candidate indeed
    never appears on the screen, and in the remaining 9, the candidate
    only appears in the closing shot as a still image. We also found
    24 ads, in which the candidate appeared only in the final interval
    and was otherwise never on screen. We manually reviewed all 24: in
    every instance, the candidate indeed only appears in that closing
    interval. In sum, we find that our measurement of the treatment
    variable is accurate.

We similarly screened respondents prior to the candidate rating task. Specifically, respondents first completed a training exercise in which a randomly generated number appeared on the screen, and they were instructed to move the rating dial to match that number (see Figure~\ref{fig:prolific2a}). The exercise consisted of a practice round followed by a qualifying round. Respondents who failed to complete the qualifying round within two attempts were exited from the survey. Successful completion required meeting two criteria: (a) a mean absolute error of no more than 15 points on the 0–100 scale between the dial position and the target value, and (b) moving the dial no more than one fewer time than the number of times the target changed on the screen, ensuring that respondents actively tracked each new target. After passing the qualifying round, respondents completed a practice rating using a randomly assigned advertisement before beginning the main task. If this advertisement was subsequently assigned during the main task, the corresponding rating was excluded from the analysis. This occurred only 25 times. Consequently, no practice ratings were included in the final dataset.

We drew 185 respondents from each of six cells crossed by partisanship (Democrat, Republican, and Independent) and gender (male or female), as indicated by Prolific. In total, 1,116 participants took the survey, 7 of whom did not successfully complete the training exercise, resulting in a total of 1,109 final participants (Democratic Female: 185; Democratic Male: 184; Republican Female: 184; Republican Male: 185; Independent Male: 186; Independent Female: 185). One respondent completed the survey a second time, with an independent random draw of advertisements, and is counted only once, leaving 1,108 unique respondents.
Using the ``Evenly Present Elements'' option in Qualtrics, we aimed to randomly assign 10 advertisements to each respondent, targeting 2 respondents per cell for each advertisement. 

\section{Additional Theoretical Results}

\subsection{Influence Function}\label{section:theorem_if}
\begin{theorem}[Influence Function for Single Outcome]\label{theorem_if}
Denote the observed data by $\cD := \{\cD_i\}_{i=1}^N$, where
$\cD_i := \{S_i,(Y_{it},W_{it},\bR_{it})_{t=1}^{S_i}\}$, and let
$\rho_s := \P(S_i\geq s)>0$. Denote the nuisance functions by
$\bm\eta_s := (\rho_s,\{\pi_{\ell\mid s},m_{\ell\mid s},
\tilde m_{\ell\mid s},p_\ell,\boldf^{(s)[\ell]}\}_{\ell=1}^s)$.
Suppose the deconfounders satisfy
bounded relative overlap. The influence function for
the target parameter $\Psi_s(\bm\delta)$ is given by
\begin{align}
 \psi_s(\cD_i;\bm\delta,\bm\eta_s,\Psi_s)
 = \frac{\varphi_s(\cD_i;\bm\delta,\bm\eta_s)
 -\mathbbm{1}\{S_i\geq s\}\Psi_s(\bm\delta)}{\rho_s}.
 \label{if_formula}
\end{align}
Here, $\varphi_s(\cD_i;\bm\delta,\bm\eta_s)$ is the uncentered
influence function for $\rho_s\Psi_s(\bm\delta)$, given by
\begin{equation}
\begin{aligned}
 &\varphi_s(\cD_i;\bm\delta,\bm\eta_s)\\
 :=\ &\mathbbm{1}\{S_i\geq s\}
 \sum_{\ell=1}^{s}\left\{\prod_{r=1}^{\ell-1}\omega_{ir\mid s}\right\}\\
 &\quad\times
 \frac{\begin{array}{@{}l@{}}
 \delta_\ell p_\ell(\overline{\bm W}_{i,\ell-1})
 m_{\ell\mid s}(\overline{\bW}_{i,\ell-1},1,
 \overline{\bm F}_{i\ell\mid s};\bm\delta_{(\ell+1):s})
 \left\{1-\frac{W_{i\ell}}{\pi_{\ell\mid s}(\overline{\bm H}_{i\ell\mid s})}\right\}
 \\[2pt]
 \quad+\{1-p_\ell(\overline{\bm W}_{i,\ell-1})\}
 m_{\ell\mid s}(\overline{\bW}_{i,\ell-1},0,
 \overline{\bm F}_{i\ell\mid s};\bm\delta_{(\ell+1):s})
 \left\{1-\frac{1-W_{i\ell}}{1-\pi_{\ell\mid s}(\overline{\bm H}_{i\ell\mid s})}\right\}
 \end{array}}
 {\delta_\ell p_\ell(\overline{\bm W}_{i,\ell-1})
 +1-p_\ell(\overline{\bm W}_{i,\ell-1})}\\
 &\quad+\sum_{\ell=1}^{s}\mathbbm{1}\{S_i\geq\ell\}
 \frac{\delta_\ell\{W_{i\ell}-p_\ell(\overline{\bm W}_{i,\ell-1})\}}
 {\{\delta_\ell p_\ell(\overline{\bm W}_{i,\ell-1})
 +1-p_\ell(\overline{\bm W}_{i,\ell-1})\}^2}\\
 &\qquad\times\left\{
 \tilde m_{\ell\mid s}(\overline{\bm W}_{i,\ell-1},1;\bm\delta_{-\ell})
 -\tilde m_{\ell\mid s}(\overline{\bm W}_{i,\ell-1},0;\bm\delta_{-\ell})
 \right\}\\
 &\quad+\mathbbm{1}\{S_i\geq s\}
 \left\{\prod_{\ell=1}^{s}\omega_{i\ell\mid s}\right\}Y_{is},
\end{aligned}\label{uncentered_if_formula}
\end{equation}
where for $\ell=1,\ldots,s$, we define
\begin{align*}
 \overline{\bm F}_{i\ell\mid s}
 &:=\left\{\boldf^{(s)[\ell]}(\bR_{ir},r)\right\}_{r=1}^{\ell},\\
 \overline{\bm H}_{i\ell\mid s}
 &:=\left\{\overline{\bW}_{i,\ell-1},
 \left\{\boldf^{(s)[t]}(\bR_{ir},r):1\leq r\leq t\leq\ell\right\}\right\},\\
 p_\ell(\overline{\bW}_{i,\ell-1})
 &:=\P(W_{i\ell}=1\mid\overline{\bW}_{i,\ell-1},S_i\geq\ell),\\
 \pi_{\ell\mid s}(\overline{\bm H}_{i\ell\mid s})
 &:=\P(W_{i\ell}=1\mid\overline{\bm H}_{i\ell\mid s},S_i\geq s),\\
 \omega_{i\ell\mid s}
 &:=\omega_{\ell\mid s}(\overline{\bm H}_{i\ell\mid s},W_{i\ell};
 \delta_\ell,p_\ell,\pi_{\ell\mid s})\\
 &:=\frac{
 \delta_\ell W_{i\ell}
 \frac{p_\ell(\overline{\bm W}_{i,\ell-1})}
 {\pi_{\ell\mid s}(\overline{\bm H}_{i\ell\mid s})}
 +(1-W_{i\ell})
 \frac{1-p_\ell(\overline{\bm W}_{i,\ell-1})}
 {1-\pi_{\ell\mid s}(\overline{\bm H}_{i\ell\mid s})}}
 {\delta_\ell p_\ell(\overline{\bm W}_{i,\ell-1})
 +1-p_\ell(\overline{\bm W}_{i,\ell-1})}.
\end{align*}
The superscript $[t]$ indicates the level of the outcome regression,
with $\boldf^{(s)[t]}(\bR_{ir},r):=\boldf_r^{(s)[t]}(\bR_{ir})$.
We also define
\begin{equation}
\begin{aligned}
 &\tilde m_{\ell\mid s}(\overline{\bm W}_{i,\ell-1},w;
 \bm\delta_{-\ell})\\
 :=\ &\E\Biggl[
 \mathbbm{1}\{S_i\geq s\}
 \left\{\prod_{r=1}^{\ell-1}\omega_{ir\mid s}\right\}
 m_{\ell\mid s}(\overline{\bW}_{i,\ell-1},w,
 \overline{\bm F}_{i\ell\mid s};\bm\delta_{(\ell+1):s})
 \ \bigg|\ \overline{\bm W}_{i,\ell-1},S_i\geq\ell
 \Biggr],
\end{aligned}\label{eq:video_marginalized_regression}
\end{equation}
for $w\in\{0,1\}$, where
$\bm\delta_{-\ell}:=(\delta_1,\ldots,\delta_{\ell-1},
\delta_{\ell+1},\ldots,\delta_{s_{\max}})$ and
$\bm\delta_{a:b}=\emptyset$ for $a>b$. 

We also recursively define
\begin{equation}
\begin{aligned}
 &m_{s\mid s}(\overline{\bW}_{is},\overline{\bm F}_{is\mid s};\emptyset)
 :=\mu_s(\overline{\bW}_{is},\overline{\bm F}_{is\mid s}),\\
 &m_{\ell\mid s}(\overline{\bW}_{i\ell},\overline{\bm F}_{i\ell\mid s};
 \bm\delta_{(\ell+1):s})\\
 :=\ &\E\Biggl[
 \frac{\delta_{\ell+1}p_{\ell+1}(\overline{\bm W}_{i\ell})
 m_{\ell+1\mid s}(\overline{\bW}_{i\ell},1,
 \overline{\bm F}_{i,\ell+1\mid s};\bm\delta_{(\ell+2):s})+\{1-p_{\ell+1}(\overline{\bm W}_{i\ell})\}
 m_{\ell+1\mid s}(\overline{\bW}_{i\ell},0,
 \overline{\bm F}_{i,\ell+1\mid s};\bm\delta_{(\ell+2):s})
 }
 {\delta_{\ell+1}p_{\ell+1}(\overline{\bm W}_{i\ell})
 +1-p_{\ell+1}(\overline{\bm W}_{i\ell})}
 \\[-2pt]
 &\hspace{2in}\bigg|\ \overline{\bW}_{i\ell},
 \overline{\bm F}_{i\ell\mid s},S_i\geq s\Biggr],
\end{aligned}\label{outcome_model}
\end{equation}
for $\ell=0,\ldots,s-1$, with
$\overline{\bW}_{i0}=\overline{\bm F}_{i0\mid s}=\emptyset$.
\end{theorem}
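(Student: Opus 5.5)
Our plan is to derive the influence function through the pathwise-derivative (Gateaux) characterization. We write $\rho_s\Psi_s(\bm\delta)$ as an explicit functional of the observed-data law, differentiate along a regular one-dimensional parametric submodel $\P_\epsilon$ with score $g(\cD_i)$, and find $\varphi_s$ such that $\partial_\epsilon\{\rho_{s,\epsilon}\Psi_{s,\epsilon}\}|_{\epsilon=0}=\E[\varphi_s(\cD_i)g(\cD_i)]$. The quotient rule then gives the influence function of $\Psi_s=(\rho_s\Psi_s)/\rho_s$. Since $\rho_s$ has influence function $\mathbbm{1}\{S_i\geq s\}-\rho_s$, this yields $\{\varphi_s-\rho_s\Psi_s\}/\rho_s-\Psi_s\{\mathbbm{1}\{S_i\geq s\}-\rho_s\}/\rho_s$, which simplifies to \eqref{if_formula}.

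The starting point is Theorem~\ref{theorem:identification} applied to a single segment. We write $\rho_s\Psi_s=\E[\mathbbm{1}\{S_i\geq s\}\, m_{0\mid s}]$, where $m_{0\mid s}$ is defined through the recursion \eqref{outcome_model}, with the $m_{\ell\mid s}$ taken as regressions on the deconfounder histories $\overline{\bm F}_{i\ell\mid s}$.

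We then split the functional into two blocks of nuisance components.
\begin{enumerate}
\item The sequential regressions $\mu_s, m_{\ell\mid s}$ and the conditional laws of $\bR_{i\ell}$ given the past, holding $p_\ell$ fixed. This block has exactly the structure of a longitudinal g-formula with a known stochastic policy $dQ_\ell$. Its contribution is obtained by the standard telescoping argument of \citet{kennedy_nonparametric_2019}. Each perturbation of the law of $(W_{i\ell},\bR_{i\ell})$ or $Y_{is}$ is re-expressed through the density ratio $\omega_{i\ell\mid s}=dQ_\ell/d\P(W_{i\ell}\mid \overline{\bm H}_{i\ell\mid s})$. Summing over $\ell$ gives the terms $\prod_{r<\ell}\omega_{ir\mid s}$ times the augmentation $\sum_w dQ_\ell(w)\,m_{\ell\mid s}(\cdot,w,\cdot)\{1-\mathbbm{1}(W_{i\ell}=w)/\P(W_{i\ell}=w\mid\cdot)\}$, together with the final term $\prod_\ell\omega_{i\ell\mid s}Y_{is}$. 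Algebraically this is the first and last lines of \eqref{uncentered_if_formula}: the telescoping sum $\sum_\ell\prod_{r<\ell}\omega_r\{\sum_w dQ_\ell m_\ell - \omega_\ell m_\ell\}+\prod\omega Y$ rewritten.
\item The marginal treatment probabilities $p_\ell$, which enter only through $dQ_\ell$. Here we use that $p_\ell$ is a regression on $\overline{\bW}_{i,\ell-1}$ among $S_i\geq\ell$, with influence function $\mathbbm{1}\{S_i\geq\ell\}\{W_{i\ell}-p_\ell\}/\P(\overline{\bW}_{i,\ell-1},S_i\geq\ell)$. By the chain rule, $\partial dQ_\ell(1)/\partial p_\ell=\delta_\ell/(\delta_\ell p_\ell+1-p_\ell)^2$ and $\partial dQ_\ell(0)/\partial p_\ell=-\delta_\ell/(\cdot)^2$. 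Integrating the remaining expression against the distribution of everything else given $(\overline{\bW}_{i,\ell-1},S_i\geq\ell)$ yields exactly the marginalized regression $\tilde m_{\ell\mid s}$ of \eqref{eq:video_marginalized_regression}, and hence the middle line of \eqref{uncentered_if_formula}. The weight $\prod_{r<\ell}\omega_{ir\mid s}$ appears inside $\tilde m$ because the earlier interventions change the law of the history, and the indicator $\mathbbm{1}\{S_i\geq s\}$ appears because $p_\ell$ is defined on the larger risk set $S_i\geq\ell$.
\end{enumerate}

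The main obstacle is that the deconfounders are themselves estimated, data-dependent functions of $\bR$, and may differ across levels $[t]$. We must argue that perturbing $\boldf^{(s)[\ell]}$ contributes no additional term. The plan is to invoke the mean-independence conditions \eqref{deconfounder_independence}--\eqref{pseudo_outcome_independence}. These imply that each $m_{\ell\mid s}$ and each $\pi_{\ell\mid s}$ conditioned on $\overline{\bm H}_{i\ell\mid s}$ equals the corresponding quantity conditioned on the full $\overline{\bR}$ history. The functional is therefore invariant to the choice of deconfounder in a neighborhood, and its derivative in the $\boldf$ direction vanishes.

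Once this invariance is in place, the remaining steps are bookkeeping. We verify mean zero using iterated expectations and $\E[\omega_{i\ell\mid s}\mid\overline{\bm H}_{i\ell\mid s}]=1$, where bounded relative overlap guarantees that the $\omega$'s are bounded. We then check that the derivative identity holds term by term, in backward order $\ell=s,\ldots,1$.
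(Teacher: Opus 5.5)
Your route is the paper's route. The omitted proof applies the chain-rule argument of Theorem~2 of \cite{nakamura2026genai} to $\rho_s\Psi_s(\bm\delta)$ and then the ratio rule. That matches your three pieces: the g-formula block with $p_\ell$ held fixed (first and last lines of \eqref{uncentered_if_formula}), the $p_\ell$ block on the risk set $S_i\geq\ell$ (the $\tilde m_{\ell\mid s}$ line), and the quotient step giving \eqref{if_formula}. Those computations are correct.

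What does not work is your resolution of the ``main obstacle.'' The mean-independence conditions \eqref{deconfounder_independence}--\eqref{pseudo_outcome_independence} concern only $Y_{is}$ and the $m_{s'+1\mid s}$. They say nothing about $W_{i\ell}$, so they cannot imply that $\pi_{\ell\mid s}(\overline{\bm H}_{i\ell\mid s})$ equals the propensity given the full $\overline{\bR}$ history. In fact that equality is impossible here. By Assumptions~\ref{treatment_feature} and~\ref{det_dec}, $W_{i\ell}=g_W(\bX_{i\ell})$ is a deterministic function of $\bR_{i\ell}$, so $\P(W_{i\ell}=1\mid\overline{\bW}_{i,\ell-1},\overline{\bR}_{i\ell},S_i\geq s)\in\{0,1\}$. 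This contradicts bounded relative overlap whenever $0<p_\ell<1$. The deconfounder is usable precisely because, by separability, it does \emph{not} encode the treatment.

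The repair is to drop this claim. $\pi_{\ell\mid s}$ does not enter the functional $\rho_s\Psi_s$ at all. Your bookkeeping only needs that it conditions on $\overline{\bm H}_{i\ell\mid s}$, which is a function of $(\overline{\bW}_{i,\ell-1},\overline{\bR}_{i\ell})$ and contains every level-$t$ deconfounder with $t\leq\ell$. That gives $\E[\omega_{i\ell\mid s}\mid\overline{\bm H}_{i\ell\mid s},S_i\geq s]=1$. Combined with the mean-independence conditions (needed only for the outcome regressions), it also gives $\E[\tilde Y_{i\ell}\mid\overline{\bm H}_{i\ell\mid s},W_{i\ell},S_i\geq s]=m_{\ell\mid s}(\overline{\bW}_{i\ell},\overline{\bm F}_{i\ell\mid s};\bm\delta_{(\ell+1):s})$. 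Here $\tilde Y_{is}:=Y_{is}$, and for $\ell<s$, $\tilde Y_{i\ell}$ is the pseudo-outcome of \eqref{pseudo_outcome}. These two identities are what make the telescoping terms mean zero.

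Separately, constancy of the g-formula value over the set of valid deconfounders at a fixed law does not by itself make the derivative in an arbitrary $\boldf$ direction vanish. A fixed $\boldf$ that is valid at $\P$ need not remain valid along a submodel $\P_\epsilon$. The paper makes no such invariance argument. It carries $\boldf^{(s)[\ell]}$ as given components of $\bm\eta_s$, and no term from perturbing them appears in \eqref{uncentered_if_formula}. You should likewise hold them fixed in the derivative computation.
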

The proof is omitted, as it follows by applying the chain-rule
argument of Theorem~2 of \cite{nakamura2026genai} to
$\rho_s\Psi_s(\bm\delta)$ and then the ratio rule. The conditioning
event in Equation~\eqref{eq:video_marginalized_regression} accounts
for the fact that $p_\ell$ is defined among respondents with
$S_i\geq\ell$, whereas $\Psi_s$ is defined among those with $S_i\geq s$.

\begin{proposition}[Influence Function for Temporally Aggregated Average Potential Outcome]
\label{prop:if_agg}
Denote the uncentered influence function for $\rho_s\Psi_s(\bm\delta)$
by
\begin{align*}
 \varphi_s(\cD_i;\bm\delta,\bm\eta_s)
 :=\rho_s\psi_s(\cD_i;\bm\delta,\bm\eta_s,\Psi_s)
 +\mathbbm{1}\{S_i\geq s\}\Psi_s(\bm\delta),
\end{align*}
and the nuisance functions by
$\bm\eta:=\{\bm\eta_s\}_{s=1}^{s_{\max}}$.
The influence function for the target parameter $\bar\Psi(\bm\delta)$
is given by
\begin{align*}
 \bar\psi(\cD_i;\bm\delta,\bm\eta,\bar\Psi)
 :=\sum_{s=1}^{s_{\max}}\varphi_s(\cD_i;\bm\delta,\bm\eta_s)
 -\bar\Psi(\bm\delta).
\end{align*}
\end{proposition}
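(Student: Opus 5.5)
The plan is to treat $\bar\Psi(\bm\delta)$ as a smooth functional of the segment-specific quantities and then apply the chain rule for pathwise derivatives. By the identification argument preceding Theorem~\ref{theorem:identification}, we have
\[
\bar\Psi(\bm\delta)=\sum_{s=1}^{s_{\max}}\rho_s\Psi_s(\bm\delta),
\qquad \rho_s=\P(S_i\geq s).
\]
The sum is finite because $\cS$ is finite. Influence functions are linear under finite sums, so it suffices to obtain the influence function of each product $\theta_s:=\rho_s\Psi_s(\bm\delta)$ and add them.

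For each $s$, I would use Theorem~\ref{theorem_if} to read off the influence function of $\theta_s$. There are two routes, and they agree.

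\emph{Product rule route.} The influence function of $\rho_s$ is $\mathbbm{1}\{S_i\geq s\}-\rho_s$, and that of $\Psi_s$ is $\psi_s$ from Equation~\eqref{if_formula}. The product rule then gives
\[
\rho_s\psi_s+\Psi_s(\mathbbm{1}\{S_i\geq s\}-\rho_s).
\]
Substituting the definition of $\varphi_s$ in the proposition, this equals $\varphi_s(\cD_i;\bm\delta,\bm\eta_s)-\rho_s\Psi_s(\bm\delta)$.

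\emph{Direct route.} The remark after Theorem~\ref{theorem_if} says that the theorem itself was obtained by first deriving the uncentered influence function $\varphi_s$ of $\rho_s\Psi_s$ and then applying the ratio rule. That gives the same centered influence function $\varphi_s-\theta_s$.

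I would verify that the two routes coincide. Multiplying Equation~\eqref{if_formula} by $\rho_s$ gives
\[
\rho_s\psi_s=\varphi_s-\mathbbm{1}\{S_i\geq s\}\Psi_s,
\]
which is exactly the defining identity in the proposition. This check is the one real bookkeeping step. It confirms that the $\varphi_s$ defined in the proposition is the same object as the one in Equation~\eqref{uncentered_if_formula}, and that its mean is $\theta_s$, since $\E[\psi_s]=0$.

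Summing over $s$, the influence function of $\bar\Psi$ is
\[
\sum_s\{\varphi_s-\rho_s\Psi_s\}=\sum_s\varphi_s-\bar\Psi(\bm\delta),
\]
which is the claimed expression. To be rigorous, I would fix a regular parametric submodel $\P_\epsilon$ with score $g$. By Theorem~\ref{theorem_if} and the product rule,
\[
\left.\frac{d}{d\epsilon}\theta_s(\P_\epsilon)\right|_{\epsilon=0}=\E[(\varphi_s-\theta_s)g].
\]
Summing these finitely many identities gives
\[
\left.\frac{d}{d\epsilon}\bar\Psi(\P_\epsilon)\right|_{\epsilon=0}=\E[\bar\psi\, g],
\]
with $\E[\bar\psi]=0$.

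The main obstacle I anticipate is a subtle one: the nuisances $\bm\eta_s$ for different $s$ share components, namely $p_\ell$ and the data on the common histories. One might worry that this sharing produces cross terms. It does not, because the pathwise derivative of each $\theta_s$ already accounts for perturbations of all of its nuisances, shared ones included, through the $\tilde m_{\ell\mid s}$ correction terms in Equation~\eqref{uncentered_if_formula}. Linearity of differentiation then requires no additional adjustment. Regularity conditions such as bounded relative overlap and $\rho_s>0$ are inherited from Theorem~\ref{theorem_if}.
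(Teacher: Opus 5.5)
Your proposal is correct and follows the same route as the paper. Both write $\bar\Psi(\bm\delta)=\sum_{s}\rho_s\Psi_s(\bm\delta)$, use $\mathbbm{1}\{S_i\geq s\}-\rho_s$ as the influence function of $\rho_s$, apply the product rule, and sum the finitely many terms. Your bookkeeping is actually the more careful one. Weighting $\psi_s$ by $\rho_s$, as the normalization in Equation~\eqref{if_formula} requires, lands exactly on the stated $\sum_s\varphi_s-\bar\Psi(\bm\delta)$. The paper's written proof instead weights $\psi_s$ by $\mathbbm{1}\{S_i\geq s\}$, substitutes $\varphi_s-\Psi_s$ for it, and ends at $\sum_s\mathbbm{1}\{S_i\geq s\}\varphi_s-\bar\Psi(\bm\delta)$. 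That expression would drop the $p_\ell$-correction terms of $\varphi_s$ for respondents with $\ell\leq S_i<s$.
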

The proof is in Appendix~\ref{proof:if_agg}.

\subsection{Estimation Algorithm for Temporally Aggregated Average Potential Outcome}
\label{section:crossfit_algorithm}

\begin{enumerate}
\item Set the incremental parameter $\bm\delta\in(0,\infty)^{s_{\max}}$.

\item Randomly partition the data into $K$ folds of equal size where
 the size of each fold is $n=N/K$. The observation index is denoted
 by $I(i)\in\{1,\ldots,K\}$, where $I(i)=k$ implies that the $i$th
 observation belongs to the $k$th fold. Let
 $\mathcal I_k:=\{i:I(i)=k\}$, keeping all segments of each respondent
 in the same fold.

\item For each fold $k=1,\ldots,K$, use the observations with
 $I(i)\neq k$ as training data:
\begin{enumerate}
\item For each $\ell=1,\ldots,s_{\max}$, estimate the conditional
 treatment probability given the past treatment sequence, which is
 denoted by
\begin{align*}
 \hat p_\ell^{(-k)}(\overline{\bW}_{i,\ell-1})
 :=\widehat{\P}^{(-k)}\left(W_{i\ell}=1\mid
 \overline{\bW}_{i,\ell-1},S_i\geq\ell\right).
\end{align*}
These estimates are shared across target outcomes $s\geq\ell$.

\item For each target outcome $s=1,\ldots,s_{\max}$, set
 $\widetilde Y_{is\mid s}^{(-k)}:=Y_{is}$ and proceed backward through
 $\ell=s,s-1,\ldots,1$. At each level, simultaneously obtain the
 estimated deconfounder and estimated outcome regression, denoted by
 $\hat{\boldf}^{(s)[\ell],(-k)}$ and
 $\hat m_{\ell\mid s}^{(-k)}$, by solving
\begin{align*}
 &\{\hat\blambda_{\ell\mid s}^{(-k)},
 \hat\btheta_{\ell\mid s}^{(-k)}\}\\
 &\quad:=\argmin_{\blambda,\btheta}\;
 \frac{1}{N_s^{(-k)}}
 \sum_{\substack{I(i)\neq k\\S_i\geq s}}
 \Biggl\{\widetilde Y_{i\ell\mid s}^{(-k)}
 -m_{\ell\mid s}\left(
 \overline{\bW}_{i\ell},
 \{\boldf^{(s)[\ell]}(\bR_{ir},r;\blambda)\}_{r=1}^{\ell};
 \bm\delta_{(\ell+1):s},\btheta\right)\Biggr\}^2,
\end{align*}
where $N_s^{(-k)}:=\sum_{I(i)\neq k}\mathbbm{1}\{S_i\geq s\}$ and
\begin{align*}
 \hat{\boldf}^{(s)[\ell],(-k)}(\bR_{ir},r)
 &:=\boldf^{(s)[\ell]}(\bR_{ir},r;
 \hat\blambda_{\ell\mid s}^{(-k)}),\\
 \widehat{\overline{\bm F}}_{i\ell\mid s}^{(-k)}
 &:=\{\hat{\boldf}^{(s)[\ell],(-k)}(\bR_{ir},r)\}_{r=1}^{\ell}.
\end{align*}
At $\ell=s$, this minimization coincides with
Equation~\eqref{eq:minimization} restricted to the training data, and
$\hat m_{s\mid s}^{(-k)}=\hat\mu_s^{(-k)}$.
For each training observation with $S_i\geq s$, compute
\begin{align*}
 &\widetilde Y_{i,\ell-1\mid s}^{(-k)}\\
 &\quad:=\frac{
 \delta_\ell\hat p_\ell^{(-k)}(\overline{\bm W}_{i,\ell-1})
 \hat m_{\ell\mid s}^{(-k)}\left(
 \overline{\bW}_{i,\ell-1},1,
 \widehat{\overline{\bm F}}_{i\ell\mid s}^{(-k)};
 \bm\delta_{(\ell+1):s}\right)}
 {\delta_\ell\hat p_\ell^{(-k)}(\overline{\bm W}_{i,\ell-1})
 +1-\hat p_\ell^{(-k)}(\overline{\bm W}_{i,\ell-1})}\\
 &\qquad+\frac{
 \{1-\hat p_\ell^{(-k)}(\overline{\bm W}_{i,\ell-1})\}
 \hat m_{\ell\mid s}^{(-k)}\left(
 \overline{\bW}_{i,\ell-1},0,
 \widehat{\overline{\bm F}}_{i\ell\mid s}^{(-k)};
 \bm\delta_{(\ell+1):s}\right)}
 {\delta_\ell\hat p_\ell^{(-k)}(\overline{\bm W}_{i,\ell-1})
 +1-\hat p_\ell^{(-k)}(\overline{\bm W}_{i,\ell-1})}.
\end{align*}
For $\ell>1$, use this pseudo-outcome at the next level of the
backward recursion.

\item For each target outcome $s=1,\ldots,s_{\max}$ and each
 $\ell=1,\ldots,s$, construct
\begin{align*}
 \widehat{\overline{\bm H}}_{i\ell\mid s}^{(-k)}
 :=\left\{\overline{\bW}_{i,\ell-1},
 \left\{\hat{\boldf}^{(s)[t],(-k)}(\bR_{ir},r):
 1\leq r\leq t\leq\ell\right\}\right\}.
\end{align*}
Using the training observations with $S_i\geq s$, estimate the
propensity score given the estimated deconfounder history, which is
denoted by
\begin{align*}
 \hat\pi_{\ell\mid s}^{(-k)}\left(
 \widehat{\overline{\bm H}}_{i\ell\mid s}^{(-k)}\right)
 :=\widehat{\P}^{(-k)}\left(W_{i\ell}=1\mid
 \widehat{\overline{\bm H}}_{i\ell\mid s}^{(-k)},S_i\geq s\right).
\end{align*}

\item For each training or held-out observation with $S_i\geq s$
 and each $\ell=1,\ldots,s$, compute
\begin{align*}
 \hat\omega_{i\ell\mid s}^{(-k)}
 &:=\frac{
 \delta_\ell W_{i\ell}
 \frac{\hat p_\ell^{(-k)}(\overline{\bm W}_{i,\ell-1})}
 {\hat\pi_{\ell\mid s}^{(-k)}(
 \widehat{\overline{\bm H}}_{i\ell\mid s}^{(-k)})}
 +(1-W_{i\ell})
 \frac{1-\hat p_\ell^{(-k)}(\overline{\bm W}_{i,\ell-1})}
 {1-\hat\pi_{\ell\mid s}^{(-k)}(
 \widehat{\overline{\bm H}}_{i\ell\mid s}^{(-k)})}}
 {\delta_\ell\hat p_\ell^{(-k)}(\overline{\bm W}_{i,\ell-1})
 +1-\hat p_\ell^{(-k)}(\overline{\bm W}_{i,\ell-1})},\\
 \tilde\omega_{i\ell\mid s}^{(-k)}(\bm\delta_{1:\ell})
 &:=\prod_{r=1}^{\ell}\hat\omega_{ir\mid s}^{(-k)},
 \qquad \tilde\omega_{i0\mid s}^{(-k)}:=1.
\end{align*}

\item For each target outcome $s=1,\ldots,s_{\max}$, each
 $\ell=1,\ldots,s$, and each $w\in\{0,1\}$, estimate
 $\widehat{\tilde m}_{\ell\mid s}^{(-k)}(
 \overline{\bm W}_{i,\ell-1},w;\bm\delta_{-\ell})$
 by the saturated regression of
\begin{align*}
 \mathbbm{1}\{S_i\geq s\}\,
 \tilde\omega_{i,\ell-1\mid s}^{(-k)}(\bm\delta_{1:(\ell-1)})
 \hat m_{\ell\mid s}^{(-k)}\left(
 \overline{\bW}_{i,\ell-1},w,
 \widehat{\overline{\bm F}}_{i\ell\mid s}^{(-k)};
 \bm\delta_{(\ell+1):s}\right)
\end{align*}
on $\overline{\bW}_{i,\ell-1}$ among training observations with
$S_i\geq\ell$. Set the regression response to zero for observations
with $\ell\leq S_i<s$. Here, $w$ is a fixed argument of the outcome
model, rather than a restriction on the observed treatment.
\end{enumerate}

\item Let
\begin{align*}
 \hat\rho_s
 &:=\frac{1}{N}\sum_{i=1}^N\mathbbm{1}\{S_i\geq s\},\\
 \hat{\bm\eta}_s^{(-k)}
 &:=\left(\hat\rho_s,
 \left\{\hat\pi_{\ell\mid s}^{(-k)},
 \hat m_{\ell\mid s}^{(-k)},
 \widehat{\tilde m}_{\ell\mid s}^{(-k)},
 \hat p_\ell^{(-k)},\hat{\boldf}^{(s)[\ell],(-k)}
 \right\}_{\ell=1}^{s}\right),\\
 \hat{\bm\eta}^{(-k)}
 &:=\left\{\hat{\bm\eta}_s^{(-k)}\right\}_{s=1}^{s_{\max}}.
\end{align*}
For each $i\in\mathcal I_k$, evaluate
$\varphi_s(\cD_i;\bm\delta,\hat{\bm\eta}_s^{(-k)})$ for every
$s=1,\ldots,s_{\max}$. For $s>S_i$, only the terms corresponding to
estimation of $p_\ell$ with $\ell\leq S_i$ can be nonzero.
Compute the estimator $\widehat{\bar\Psi}(\bm\delta)$ as a solution to
\begin{align*}
 \frac{1}{nK}\sum_{k=1}^K\sum_{I(i)=k}
 \bar\psi\left(\cD_i;\bm\delta,
 \hat{\bm\eta}^{(-k)},\widehat{\bar\Psi}\right)=0,
\end{align*}
and compute
\begin{align*}
 \hat{\bar\sigma}^2(\bm\delta)
 =\frac{1}{nK}\sum_{k=1}^K\sum_{I(i)=k}
 \bar\psi\left(\cD_i;\bm\delta,
 \hat{\bm\eta}^{(-k)},\widehat{\bar\Psi}\right)^2.
\end{align*}
\end{enumerate}

\section{Theoretical Proofs}

\subsection{Proof of Proposition~\ref{prop:if_agg}}\label{proof:if_agg}
\begin{proof}
First, notice that
 \begin{align*}
     \bar{\Psi}(\bm\delta) = \E\biggl[ \sum_{s = 1}^{s_{\max}} \mathbbm{1}\{S_i \geq s\} \Psi_s(\bm\delta) \biggr] = \sum_{s = 1}^{s_{\max}} \P(S_i \geq s) \Psi_s(\bm\delta).
\end{align*}
Now, from example 1 of \cite{hines_demystifying_2022}, the influence function for $\P(S_i \geq s)$ is given by
\begin{align*}
     \psi^{(S_i \geq s)}(\bm\delta) = \mathbbm{1}\{S_i \geq s\} - \P(S_i \geq s).
 \end{align*}
Therefore, by product rule, the influence function for $\bar{\Psi}(\bm\delta)$ is given by
\begin{align*}
     \bar{\psi}(\cD_{i}; \bm\delta, \bm\eta, \bar\Psi) &= \sum_{s=1}^{s_{\max}}
 \biggl\{ \Psi_s(\bm\delta) \psi^{(S_i \geq s)}(\bm\delta)  + \mathbbm{1}\{S_{i} \geq s\}
 \psi_s(\cD_{is}; \bm\delta, \bm\eta_s, \Psi_s) \biggr\}\\
 &= \sum_{s=1}^{s_{\max}}
 \biggl\{
 \{ \mathbbm{1}\{S_i \geq s\} - \P(S_i \geq s) \}\Psi_s(\bm\delta)
 +
 \mathbbm{1}\{S_i \geq s\}
 \{ \varphi_s(\cD_{is}; \bm\delta, \bm\eta_s) - \Psi_s(\bm\delta) \}
 \biggr\}\\
 &= \sum_{s=1}^{s_{\max}} \mathbbm{1}\{S_i \geq s\} \, \varphi_s(\cD_{is}; \bm\delta, \bm\eta_s) - \sum_{s=1}^{s_{\max}} \P(S_i \geq s) \Psi_s(\bm\delta)\\
 &= \sum_{s=1}^{s_{\max}} \mathbbm{1}\{S_i \geq s\} \, \varphi_s(\cD_{is}; \bm\delta, \bm\eta_s) - \bar{\Psi}(\bm\delta).
\end{align*}
\end{proof}

\begin{comment}
\subsection{Proof of Proposition~\ref{prop:if_agg}}\label{proof:if_agg}
\begin{proof}
First, notice that
\begin{align*}
    \bar{\Psi}(\bm\delta) = \E\biggl[ \sum_{s = 1}^{S_i} \Psi_s(\bm\delta) \biggr] = \E\biggl[ \sum_{s = 1}^{s_{\max}} \mathbbm{1}\{S_i \geq s\} \Psi_s(\bm\delta) \biggr] = \sum_{s = 1}^{s_{\max}} \P(S_i \geq s) \Psi_s(\bm\delta).
\end{align*}
Now, from example 1 of \cite{hines_demystifying_2022}, the influence function for $\P(S_i \geq s)$ is given by
\begin{align*}
    \psi^{(S_i \geq s)}(\bm\delta) = \mathbbm{1}\{S_i \geq s\} - \P(S_i \geq s).
\end{align*}
Therefore, by product rule, the influence function for $\bar{\Psi}(\bm\delta)$ is then given by
\begin{align*}
    \bar{\psi}(\cD_{i}; \bm\delta, \bm\eta, \bar\Psi) &= \Psi_s(\bm\delta) \psi^{(S_i \geq s)}(\bm\delta)  + \mathbbm{1}\{S_{is} \geq s\}
\psi_s(\cD_{is}; \bm\delta, \bm\eta_s, \Psi_s)\\
&= \sum_{s=1}^{s_{\max}}
\biggl\{
\{ \mathbbm{1}\{S_i \geq s\} - \P(S_i \geq s) \}\Psi_s(\bm\delta)
+
\mathbbm{1}\{S_i \geq s\}
\psi_s(\cD_{is}; \bm\delta, \bm\eta_s, \Psi_s)
\biggr\}\\
&= \sum_{s=1}^{s_{\max}} \varphi_s(\cD_{is}; \bm\delta, \bm\eta_s, \Psi_s) - \sum_{s=1}^{s_{\max}} \P(S_i \geq s) \Psi_s(\bm\delta)\\
&= \sum_{s=1}^{s_{\max}} \varphi_s(\cD_{is}; \bm\delta, \bm\eta_s, \Psi_s) - \bar{\Psi}(\bm\delta).
\end{align*}
\end{proof}
\end{comment}

\end{document}